\pgfplotsset{width=9cm,compat=1.5.1}
\newtheorem{definition}{Definition}
\newtheorem{lemma}{Lemma}
\newtheorem{theorem}{Theorem}
\newtheorem{proposition}{Proposition}
\newtheorem{example}{Example}
\newcommand{\tr}{\operatorname{Tr}}
\newcommand{\rank}{\operatorname{rank}}
\newcommand{\bra}[1]{\langle #1 |}
\newcommand{\ket}[1]{| #1 \rangle}
\newcommand{\ketbra}[2]{| #1 \rangle\langle #2 |}
\newcommand\Tr{\mathop{\rm Tr}\nolimits}
\newcommand\supp{\mathop{\rm supp}\nolimits}
\newcommand{\defeq}{\stackrel{\smash{\textnormal{\tiny def}}}{=}}
\def\C{\mathbb{C}}
\def\R{\mathbb{R}}
\def\D{\mathcal{D}}
\def\Lin{\mathcal{L}}
\def\HP{\mathcal{L}^{\textup{H}}}
\def\S{\mathcal{S}}
\def\0{\mathbf{0}}
\def\Pos{\operatorname{Pos}}
\begin{document}

\title{The Non-m-Positive Dimension of a Positive Linear Map}

\author{Nathaniel Johnston}\affiliation{Department of Mathematics \& Computer Science, Mount Allison University, Sackville, NB, Canada E4L 1E4}\affiliation{Department of Mathematics \& Statistics, University of Guelph, Guelph, ON, Canada N1G 2W1}
	
\author{Benjamin Lovitz}\affiliation{Institute for Quantum Computing, Department of Applied Mathematics, University of Waterloo, Waterloo, ON, Canada N2L~3G1}

\author{Daniel Puzzuoli}\affiliation{Department of Mathematics and Statistics, University of Ottawa, Ottawa, ON, Canada K1N 6N5}\affiliation{School of Mathematics and Statistics, Carleton University, Ottawa, ON, Canada K1S 5B6}

\date{August 2, 2019}

\maketitle

\begin{abstract}
	We introduce a property of a matrix-valued linear map $\Phi$ that we call its ``non-m-positive dimension'' (or ``non-mP dimension'' for short), which measures how large a subspace can be if every quantum state supported on the subspace is non-positive under the action of $I_m \otimes \Phi$. Equivalently, the non-mP dimension of $\Phi$ tells us the maximal number of negative eigenvalues that the adjoint map $I_m \otimes \Phi^*$ can produce from a positive semidefinite input. We explore the basic properties of this quantity and show that it can be thought of as a measure of how good $\Phi$ is at detecting entanglement in quantum states. We derive non-trivial bounds for this quantity for some well-known positive maps of interest, including the transpose map, reduction map, Choi map, and Breuer--Hall map. We also extend some of our results to the case of higher Schmidt number as well as the multipartite case. In particular, we construct the largest possible multipartite subspace with the property that every state supported on that subspace has non-positive partial transpose across at least one bipartite cut, and we use our results to construct multipartite decomposable entanglement witnesses with the maximum number of negative eigenvalues.
\end{abstract}

\section{Introduction}

One of the central problems in quantum information theory is that of determining whether a given quantum state is separable or entangled \cite{GT09,HHH09}. One of the standard approaches to this problem is to make use of entanglement witnesses (or equivalently, positive matrix-valued linear maps)---Hermitian operators that verify the entanglement in certain subsets of states.

While there are several methods of quantifying the amount of entanglement in a quantum state (see \cite{PV07,Vid00} and the references therein), there are hardly any methods of quantifying the effectiveness of an entanglement witness (or equivalently, of a positive linear map) at detecting entanglement in quantum states. The only notion along these lines that we are aware of is that of one entanglement witness being ``finer'' than another \cite{LKCH00}, which refers to the situation where one entanglement witness detects entanglement in a superset of quantum states that are detected by the other.

This paper aims to fill this gap somewhat by introducing what we call the ``non-$m$-positive dimension'' and ``non-completely-positive ratio'' as measures of how effective a positive linear map (or entanglement witness) is at detecting entanglement. Given a positive linear map $\Phi$ on $M_n$, these measures quantify how large of a subspace of $\C^m \otimes \C^n$ the augmented linear map $I_m \otimes \Phi$ detects entanglement in. Equivalently, this quantity tells us how many negative eigenvalues the adjoint map $I_m \otimes \Phi^*$ can produce when applied to a quantum state (with at least one negative eigenvalue meaning that the entanglement in that state was detected).

The paper is organized as follows. In Section~\ref{sec:prelims}, we review the mathematical preliminaries of quantum entanglement detection that form the basis of this paper. In Section~\ref{sec:non_mp_dim}, we introduce the non-$m$-positive dimension and non-completely-positive ratio, explore some of their basic properties, and come up with a general non-trivial upper bound on them that can be efficiently computed. In Section~\ref{sec:specific_maps}, we apply our bounds and methods to several well-known positive linear maps, including the transpose map, reduction map, Choi map, and Breuer--Hall map. In Section~\ref{sec:relationship_other_measures}, we briefly show that the non-completely-positive ratio agrees with the notion of ``finer'' entanglement witnesses. In Section~\ref{sec:schmidt_k_pos}, we briefly discuss how to extend our results to the case of $k$-positive linear maps (which detect large Schmidt number in quantum states). In Section~\ref{sec:multipartite_ppt}, we solve the multipartite version of this problem for the transpose map, thus generalizing known bipartite results about the partial transpose, including (for example) how many negative eigenvalues decomposable entanglement witnesses can have. Finally, we close in Section~\ref{sec:conclusions} with some open questions and directions for future research.

\section{Mathematical Preliminaries}\label{sec:prelims}

Our notation and terminology is quite standard, so our introduction here is brief---for a more thorough treatment of the mathematics of quantum entanglement and matrix-valued linear maps, we direct the reader to a survey paper like \cite{GT09,HHH09} or a textbook like \cite{NC00,Wat18}.

We use $M_n$ to denote the set of $n \times n$ complex matrices and $\Pos(M_n)$ to denote the set of (Hermitian) positive semidefinite matrices in $M_n$. The ordering $A \succeq B$ refers to the Loewner ordering---$A \succeq B$ means that $A-B$ is positive semidefinite, and in particular $A \succeq O$ means that $A$ is positive semidefinite. We use $\D(M_n) \subseteq \Pos(M_n)$ to denote the set of density matrices (mixed quantum states), which are positive semidefinite matrices with trace $1$. We typically denote density matrices by lowercase Greek letters like $\rho \in \D(M_n)$.

We say that a linear map $\Phi$ acting on $M_n$ is \emph{Hermiticity-preserving} if $\Phi(X)$ is Hermitian whenever $X$ is Hermitian, and we denote the set of such maps by $\HP(M_n,M_n)$. We say that $\Phi \in \HP(M_n,M_n)$ is \emph{positive} if $\Phi(X) \in \Pos(M_n)$ whenever $X \in \Pos(M_n)$. More generally, we say that $\Phi$ is \emph{$m$-positive} if $I_m \otimes \Phi$ is positive, and we say that $\Phi$ is \emph{completely positive} if it is $m$-positive for all $m \geq 1$. The \emph{adjoint} of a linear map $\Phi \in \HP(M_n,M_n)$ is the map $\Phi^* \in \HP(M_n,M_n)$ defined by $\tr(X\Phi(Y)) = \tr(\Phi^*(X)Y)$ for all Hermitian $X,Y \in M_n$, and it is straightforward to show that $\Phi$ is $m$-positive if and only if $\Phi^*$ is $m$-positive.

Given a linear map $\Phi \in \HP(M_n,M_m)$, its \emph{Choi matrix} is the (Hermitian) matrix
\begin{align*}
	J(\Phi) \defeq n(I_n \otimes \Phi)\big(\ketbra{\psi^+}{\psi^+}\big) \in M_n \otimes M_m,
\end{align*}
where $\ket{\psi^+} = \frac{1}{\sqrt{n}}\sum_{i=1}^n\ket{i}\otimes\ket{i} \in \C^n \otimes \C^n$ is the standard maximally-entangled pure state (and $\{\ket{i}\}$ denotes the standard basis of $\C^n$). It is well-known that $\Phi$ is completely positive if and only if $J(\Phi)$ is positive semidefinite, if and only if it is $n$-positive \cite{Cho75}.

A density matrix $\rho \in \D(M_m \otimes M_n)$ is called \emph{separable} \cite{Wer89} if there exist $\{X_i\} \subseteq \Pos(M_m)$ and $\{Y_i\} \subseteq \Pos(M_n)$ such that $\rho = \sum_i X_i \otimes Y_i$, and it is called \emph{entangled} otherwise. Positive linear maps characterize separability in the sense that $\rho$ is separable if and only if $(I_m \otimes \Phi)(\rho) \succeq O$ for all positive maps $\Phi \in \HP(M_n,M_m)$. Equivalently, $\rho$ is separable if and only if $\tr(J(\Phi)\rho) \geq 0$ for all positive maps $\Phi \in \HP(M_m,M_n)$. For this reason, the Choi matrix $J(\Phi)$ of a positive but not completely positive matrix is called an \emph{entanglement witness}. If $\Phi$ is positive and $(I_m \otimes \Phi)(\rho) \not\succeq O$ then we say that $\Phi$ ``detects'' the entanglement in $\rho$, and similarly if $\tr(J(\Phi)\rho) < 0$ then we say that $J(\Phi)$ ``detects'' the entanglement in $\rho$.

The standard example of a map that is positive but not completely positive (and not even $2$-positive) is the transpose map $T \in \HP(M_n,M_n)$---see \cite{Pau03}, for example. The map $I_m \otimes T$ is called the \emph{partial transpose}, and states $\rho \in \D(M_m \otimes M_n)$ for which $(I_m \otimes T)(\rho) \succeq O$ are called \emph{positive partial transpose} (or \emph{PPT} for short). Our discussion of separable states above tells us that every separable state is PPT, but the converse does not hold.

The \emph{trace norm} of a matrix $X \in M_n$, denoted by $\|X\|_1$, is the sum of the singular values of $X$. The \emph{induced trace norm} \cite{Wat05} of a linear map $\Phi \in \HP(M_n,M_n)$ is the quantity
\[
    \|\Phi\|_1 \defeq \max \left\{ \big\|\Phi(X)\big\|_1 : X \in M_n , \|X\|_1 \leq 1\right\}.
\]
This norm is not stable under tensor products (unless $\Phi$ is completely positive, in which case we have $\|I_m \otimes \Phi\|_1 = \|\Phi^*(I)\|$ for all $m \geq 1$, where $\|\cdot\|$ denotes the usual operator norm), so we define $\|\Phi\|_m = \|I_m \otimes \Phi\|_1$ for all $m \geq 1$. We similarly define $d_m(\Phi_1,\Phi_2) \defeq \|\Phi_1 - \Phi_2\|_m$ to be the distance between two maps $\Phi_1,\Phi_2 \in \HP(M_n,M_n)$ in this norm, and we note that $\|\Phi\|_m = \|\Phi\|_n$ whenever $m \geq n$. In this case, this norm is typically referred to as the \emph{diamond norm} of $\Phi$ and denoted by $\|\Phi\|_{\diamond}$ \cite{Kit97}, and we similarly use the notation $d_\diamond(\Phi_1,\Phi_2) = d_m(\Phi_1,\Phi_2)$ whenever $m \geq n$.

\section{The Non-Positive Dimension and Ratio}\label{sec:non_mp_dim}

We now introduce the basic properties of the central quantity of interest in this paper.

\begin{definition}\label{defn:non_p_dim}
    If $m \geq 1$ is an integer and $\Phi \in \HP(M_n,M_n)$ then we say that the \textbf{non-$\mathbf{m}$-positive dimension} (or \textbf{non-$\mathbf{m}$P dimension}) of $\Phi$, denoted by $\nu_m(\Phi)$, is the quantity
    \begin{align*}
        \nu_m(\Phi) \defeq \max_{\S \subseteq \C^m \otimes \C^n} \big\{ \dim(\S) : & \ \S \ \text{is a subspace such that} \ (I_m \otimes \Phi)(\rho) \not\succeq O \ \text{whenever} \\
        & \ \rho \in \D(M_m \otimes M_n) \ \text{and} \ \supp(\rho) \subseteq \mathcal{S} \big\}.
    \end{align*}
\end{definition}

If we recall that the statement $(I_m \otimes \Phi)(\rho) \not\succeq O$ means that $\Phi$ detects entanglement in the state $\rho \in \D(M_m \otimes M_n)$, then it seems natural to think of $\nu_m(\Phi)$ as measuring how much space $\Phi$ is capable of detecting entanglement in. We thus think of $\nu_m(\Phi)$ as a rough measure of how ``good'' $\Phi$ is at detecting entanglement, with higher values corresponding to ``better'' entanglement detection.

The following theorem provides us with another way of thinking about $\nu_m(\Phi)$ in terms of its adjoint map $\Phi^*$, which has a similar interpretation of ``how much entanglement $\Phi^*$ can detect'').

\begin{theorem}\label{thm:neg_evals_subspace_corr}
    Suppose $\Phi \in \HP(M_n,M_n)$. Then the maximum number of strictly negative eigenvalues that $(I_m \otimes \Phi^*)(\rho)$ can have as $\rho$ ranges over $\D(M_m \otimes M_n)$ is exactly $\nu_m(\Phi)$.
\end{theorem}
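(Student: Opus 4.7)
The plan is to establish the two inequalities $\nu_m(\Phi) \geq k^\star$ and $\nu_m(\Phi) \leq k^\star$, where $k^\star$ denotes the maximum number of strictly negative eigenvalues of $(I_m \otimes \Phi^*)(\rho)$ as $\rho$ ranges over $\D(M_m \otimes M_n)$. Both directions use the (tensorized) adjoint relation $\tr((I_m \otimes \Phi)(\sigma)\tau) = \tr(\sigma(I_m \otimes \Phi^*)(\tau))$ together with the standard Courant--Fischer fact that the number of strictly negative eigenvalues of a Hermitian matrix $H$ equals the maximum dimension of a subspace on which $H$ restricts to a strictly negative-definite operator.

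For the inequality $\nu_m(\Phi) \geq k^\star$, I would fix a state $\rho$ attaining $k^\star$, let $\ket{v_1},\ldots,\ket{v_{k^\star}}$ be an orthonormal set of negative eigenvectors of $(I_m \otimes \Phi^*)(\rho)$ with eigenvalues $\lambda_1,\ldots,\lambda_{k^\star}<0$, and let $\mathcal{S}$ be their span. For any density matrix $\sigma$ with $\supp(\sigma) \subseteq \mathcal{S}$, expanding $(I_m \otimes \Phi^*)(\rho)$ in its eigenbasis gives
\[
    \tr\bigl(\sigma (I_m \otimes \Phi^*)(\rho)\bigr) = \sum_{j=1}^{k^\star} \lambda_j \langle v_j | \sigma | v_j \rangle < 0,
\]
since each $\lambda_j < 0$ and, $\sigma$ being a nonzero density supported on $\mathcal{S}$, at least one diagonal entry $\langle v_j | \sigma | v_j \rangle$ is strictly positive. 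By the adjoint relation this equals $\tr((I_m \otimes \Phi)(\sigma)\rho)$, and because $\rho \succeq O$ this forces $(I_m \otimes \Phi)(\sigma) \not\succeq O$. Hence $\mathcal{S}$ is admissible in Definition~\ref{defn:non_p_dim}, giving $\nu_m(\Phi) \geq \dim(\mathcal{S}) = k^\star$.

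The reverse inequality is the main obstacle, because I need to produce a \emph{single} state $\tau^\star$ that simultaneously witnesses the non-positivity of every $(I_m \otimes \Phi)(\sigma)$ as $\sigma$ varies. Fix an optimal admissible subspace $\mathcal{S}$ of dimension $d = \nu_m(\Phi)$, let $\mathcal{K}$ be the compact convex set of density matrices supported on $\mathcal{S}$, and let $U : \C^d \to \C^m \otimes \C^n$ be an isometry with range $\mathcal{S}$, so that $\mathcal{K} = \{U \tilde\sigma U^* : \tilde\sigma \in \D(M_d)\}$. The defining property of $\mathcal{S}$ gives $\lambda_{\min}((I_m \otimes \Phi)(\sigma)) < 0$ pointwise on $\mathcal{K}$; continuity and compactness upgrade this to a uniform margin
\[
    -\epsilon \defeq \max_{\sigma \in \mathcal{K}} \lambda_{\min}\bigl((I_m \otimes \Phi)(\sigma)\bigr) < 0.
\]
Writing $\lambda_{\min}(X) = \min_{\tau \in \D(M_m \otimes M_n)} \tr(X\tau)$ and applying Von Neumann's minimax theorem to the bilinear form $(\tilde\sigma,\tau) \mapsto \tr\bigl(\tilde\sigma\, U^*(I_m \otimes \Phi^*)(\tau) U\bigr)$ on the compact convex product $\D(M_d) \times \D(M_m \otimes M_n)$, I would swap the order of optimization to obtain
\[
    -\epsilon = \min_{\tau} \lambda_{\max}\bigl(U^* (I_m \otimes \Phi^*)(\tau) U\bigr).
\]
Any minimizer $\tau^\star$ then satisfies $U^*(I_m \otimes \Phi^*)(\tau^\star) U \preceq -\epsilon I_d$, so $\mathcal{S}$ is a $d$-dimensional subspace on which $(I_m \otimes \Phi^*)(\tau^\star)$ is strictly negative-definite, yielding $k^\star \geq d$ by the Courant--Fischer observation above. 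The crucial conceptual step in this direction is extracting the uniform gap $\epsilon > 0$ from compactness: without it, the minimax argument would only produce a negative \emph{semi}definite restriction, which is too weak to lower-bound the number of strictly negative eigenvalues.
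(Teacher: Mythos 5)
Your proof is correct, and the second half takes a genuinely different route from the paper's. The first inequality ($\nu_m(\Phi)\geq k^\star$) is essentially the paper's argument: both take the span of the strictly negative eigenspace of $(I_m\otimes\Phi^*)(\rho)$ and use the adjoint relation plus $\rho\succeq O$ to conclude that every state supported there is detected. For the reverse inequality the paper proceeds via duality of cones: it uses compactness to find $c<1$ with $\tr(P\sigma)\leq c$ for all $\sigma$ in $\mathcal{C}=\{\sigma\in\D(M_m\otimes M_n):(I_m\otimes\Phi)(\sigma)\succeq O\}$, places $X=I-\tfrac{1}{c}P$ in the dual cone $\mathcal{C}^\circ$, invokes the characterization of $\mathcal{C}^\circ$ as $\Pos+(I_m\otimes\Phi^*)(\Pos)$ to write $X=X_1+(I_m\otimes\Phi^*)(X_2)$, and observes that $(I_m\otimes\Phi^*)(X_2)=X-X_1\preceq X$ has at least as many negative eigenvalues as $X$ (Weyl). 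You instead extract a uniform spectral gap $-\epsilon<0$ on the compact set of states supported on the optimal subspace and swap the order of optimization in the bilinear form $\tr\bigl(\tilde\sigma\,U^*(I_m\otimes\Phi^*)(\tau)U\bigr)$, producing a single minimizer $\tau^\star$ whose image is $\preceq-\epsilon I_d$ on the entire subspace; Courant--Fischer then yields $d$ strictly negative eigenvalues. Both arguments hinge on the same compactness-derived strict gap, but yours replaces the dual-cone decomposition (which requires knowing that $\mathcal{C}^\circ$ is exactly $\Pos+(I_m\otimes\Phi^*)(\Pos)$, a closedness fact the paper delegates to standard convex-analysis references) with a minimax swap, which is arguably more self-contained here since the payoff is bilinear and both feasible sets are compact and convex. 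Your identification of the uniform gap as the crucial step is exactly right: without it the swap yields only a negative semidefinite restriction. One pedantic remark: the tool you need is the Kneser--Fan/Sion generalization of von Neumann's theorem to arbitrary compact convex sets rather than the original statement for simplices, but this is entirely standard and not a gap.
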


The correspondence between $\nu_m(\Phi)$ and the number of negative eigenvalues of $(I_m \otimes \Phi^*)(\rho)$ was made explicit in \cite{Joh13} in the case when $\Phi = T$ is the transpose map, and the proof of this more general claim follows the same logic in a straightforward manner. We prove it explicitly for completeness.

\begin{proof}[Proof of Theorem~\ref{thm:neg_evals_subspace_corr}]
    Throughout this proof, let $p$ denote the maximum number of negative eigenvalues that $(I_m \otimes \Phi^*)(\rho)$ can have.
    
    To see that $\nu_m(\Phi) \geq p$, let $\rho \in \D(M_m \otimes M_n)$ be such that $(I_m \otimes \Phi^*)(\rho)$ has $p$ strictly negative eigenvalues. Write $(I_m \otimes \Phi^*)(\rho) = P_{+} - P_{-}$ where $P_{+},P_{-}$ are positive semidefinite with orthogonal support (i.e., they come from the spectral decomposition of $(I_m \otimes \Phi^*)(\rho)$). If we let $\mathcal{S}$ be the support of $P_{-}$ then $\dim(\mathcal{S}) = p$, and we claim that every state $\sigma \in \D(M_m \otimes M_n)$ with support in $\mathcal{S}$ is such that $(I_m \otimes \Phi)(\sigma) \not\succeq O$. To see why this claim holds, we note that
    \begin{align*}
        \tr\big((I_m \otimes \Phi)(\sigma)\rho\big) = \tr\big(\sigma(I_m \otimes \Phi^*)(\rho)\big) = -\tr(\sigma P_{-}) < 0,
    \end{align*}
    so $(I_m \otimes \Phi)(\sigma) \not\succeq O$, as claimed. It follows that $\nu_m(\Phi) \geq p$.
    
    Conversely, to see that $p \geq \nu_m(\Phi)$, let $P \in \Pos(M_m \otimes M_n)$ be the orthogonal projection onto some subspace $\mathcal{S} \subseteq \C^m \otimes \C^n$ for attaining the maximum in the definition of $\nu_m(\Phi)$ (i.e., $\dim(\mathcal{S}) = \nu_m(\Phi)$). Since there is no $\sigma \in \D(M_m \otimes M_n)$ satisfying both $(I_m \otimes \Phi)(\sigma) \succeq O$ and $P\sigma P = \sigma$, we have $\Tr(P\sigma) < 1$ whenever $(I_m \otimes \Phi)(\sigma) \succeq O$. Furthermore, since the set of states satisfying $(I_m \otimes \Phi)(\sigma) \succeq O$ is compact, there exists a real constant $0 < c < 1$ such that $\Tr(P\sigma) \leq c$ for all $\sigma \in \D(M_m \otimes M_n)$ satisfying $(I_m \otimes \Phi)(\sigma) \succeq O$. If we define the matrix $X = I - \tfrac{1}{c}P$ then $X$ has exactly $\dim(\mathcal{S}) = \nu_m(\Phi)$ negative eigenvalues and $\Tr(X\sigma) \geq 0$ for all $\sigma \in \D(M_m \otimes M_n)$ with $(I_m \otimes \Phi)(\sigma) \succeq O$. The latter fact is equivalent to the statement that $X$ is in the \emph{dual cone} $\mathcal{C}^\circ$ of the set
    \[
        \mathcal{C} = \big\{ \sigma \in \D(M_m \otimes M_n) : (I_m \otimes \Phi)(\sigma) \succeq O \big\}.
    \]
    
    By using basic facts about dual cones (see \cite{BV04,SSZ09}, for example), we see that there exist $X_1,X_2 \in \Pos(M_m \otimes M_n)$ such that $X = X_1 + (I_m \otimes \Phi^*)(X_2)$. Finally, we define $\rho = X_2/\Tr(X_2) \in \D(M_m \otimes M_n)$. Since $X$ has $\nu_m(\Phi)$ negative eigenvalues, and (up to scaling) $(I_m \otimes \Phi^*)(X_2) = X - X_1$, it follows that $(I_m \otimes \Phi^*)(\rho)$ has at least $\nu_m(\Phi)$ negative eigenvalues as well. That is, we have shown that $p \geq \nu_m(\Phi)$.
\end{proof}

Before proceeding to discuss the basic properties of $\nu_m(\Phi)$ and compute bounds on it, it is important to emphasize that it does not stabilize in the same way that $m$-posivity of linear maps does. While it is true that $\Phi \in \HP(M_n,M_n)$ being $n$-positive implies that it is $m$-positive for all $m \geq n$ (i.e., $\nu_n(\Phi) = 0$ implies $\nu_m(\Phi) = 0$ for all $m \geq n$), it is \emph{not} the case that $\nu_m(\Phi) = \nu_n(\Phi)$ for all $m \geq n$ in general.

For example, it is known that the transpose map $T \in \HP(M_n,M_n)$ satisfies $\nu_m(T) = (m-1)(n-1)$ for all $m$ and $n$ \cite{Joh13}, so $\nu_m(T) \rightarrow \infty$ as $m \rightarrow \infty$. Something similar actually happens for every non-completely-positive map $\Phi \in \HP(M_n,M_n)$: if $(I_m \otimes \Phi^*)(\rho)$ has $\nu_m(\Phi) \geq 1$ negative eigenvalues, then for any integer $k \geq 1$ we can define a state $\rho_k \in \D(M_{km} \otimes M_n)$ by
\[
    \rho_k = \frac{1}{k}\sum_{i=0}^{k-1} (P_i \otimes I_n)\rho(P_i^* \otimes I_n),
\]
where each $P_i \in M_{mk,m}$ is defined by $P_i\ket{j} = \ket{j+mi}$ for all $0 \leq j < m$. It is then straightforward to check that $(I_{km} \otimes \Phi^*)(\rho_k)$ has $k\nu_m(\Phi) \geq k$ negative eigenvalues, so necessarily $\nu_{km}(\Phi) \rightarrow \infty$ as $k \rightarrow \infty$. More generally, this argument implies that $\nu_{km}(\Phi) \geq k \nu_m(\Phi)$ for any $\Phi$, $k$, and $m$. This leads somewhat naturally to the following definition, which can be thought of as a regularization of $\nu_m(\Phi)$ over $m$:

\begin{definition}\label{defn:non_cp_dim}
    If $\Phi \in \HP(M_n,M_n)$ then we say that the \textbf{non-completely-positive ratio} (or \textbf{non-CP ratio}) of $\Phi$, denoted by $\nu(\Phi)$, is the quantity
    \begin{align*}
        \nu(\Phi) \defeq \lim_{m\rightarrow\infty} \frac{\nu_m(\Phi)}{m}.
    \end{align*}
\end{definition}

Before proceeding to prove some basic facts and bounds on these quantities, we should clarify that, although $\nu_m(\Phi)$ is always an integer, $\nu(\Phi)$ need not be. Also note that we have implicitly assumed that the limit in the above definition exists. We will give the proof that this limit always exists in the next section.

\subsection{Bounds and Basic Properties}\label{sec:bounds_properties}

We now list some basic properties of $\nu_m(\Phi)$ and $\nu(\Phi)$ that can each be proved in a line or two, or that follow immediately from known results. We have already mentioned a few of these properties earlier, but we list them here anyway for ease of reference. In each of the following statements, $\Phi \in \HP(M_n,M_n)$ is positive.

\begin{itemize}
    \item For all $m \geq k \geq 1$, we have $\nu_m(\Phi) \geq \nu_k(\Phi)$.
    
    \item $\Phi$ is $m$-positive if and only if $\nu_m(\Phi) = 0$. In particular, $\Phi$ is completely positive if and only if $\nu_n(\Phi) = 0$, if and only if $\nu(\Phi) = 0$. In other words, if $\Phi$ is capable of detecting entanglement in some state then $\nu_n(\Phi) \geq 1$ and $\nu(\Phi) \geq 1/n$.
    
    \item If $\Phi$ is not $m$-positive then $\nu(\Phi) \geq 1/m$. This follows from the fact that $\nu_m(\Phi) \geq 1$ along with the argument from earlier that $\nu_{km}(\Phi) \geq k\nu_m(\Phi)$.
    
    \item $\nu_m(\Phi) \leq (m-1)(n-1)$ and thus $\nu(\Phi) \leq n-1$. This follows from the fact that every subspace of dimension larger than $(m-1)(n-1)$ contains a separable state $\ket{v} \otimes \ket{w}$ \cite{Par04}, and $(I_m \otimes \Phi)(\ketbra{v}{v} \otimes \ketbra{w}{w}) = \ketbra{v}{v} \otimes \Phi(\ketbra{w}{w}) \succeq O$.
    
    \item More generally, if $\Phi$ is $k$-positive then $\nu_m(\Phi) \leq (m-k)(n-k)$ and thus $\nu(\Phi) \leq n-k$. This follows from the fact that every subspace of dimension larger than $(m-k)(n-k)$ contains a state with Schmidt rank at most $k$ \cite{CW08}, and $I_m \otimes \Phi$ sends such states to PSD operators \cite{TH00}.
    
    \item The transpose map $T \in \HP(M_n,M_n)$ attains the bound $\nu_m(T) = (m-1)(n-1)$ (and thus the bound $\nu(T) = n-1$) for all $m$ and $n$ \cite{Joh13}.
    
     \item If $\Psi \in \HP(M_n,M_n)$ is $m$-positive then $\nu_m(\Phi + \Psi) \leq \nu_m(\Phi)$. In particular, if $\Psi$ is completely positive then $\nu(\Phi + \Psi) \leq \nu(\Phi)$.
    
    \item If $\Psi \in \HP(M_n,M_n)$ is $m$-positive then $\nu_m(\Psi \circ \Phi) \leq \nu_m(\Phi)$. In particular, if $\Psi$ is completely positive then $\nu( \Psi \circ \Phi) \leq \nu(\Phi)$.
    
    \item If $\Psi \in \HP(M_n,M_n)$ is $m$-positive, the inequality $\nu_m(\Phi \circ \Psi) \leq \nu_m(\Phi)$ does not necessarily hold, as we now demonstrate. Let $n=2k$ be even and let $\Psi^*$ be defined by $\Psi^*(X)=I_k \otimes X_2$, where $X_2$ is the upper-left $2 \times 2$ block of $X$. Let $\Phi^*$ be defined by $\Phi^*(X)=\ketbra{0}{0}\otimes X_2^T$. Then $\Psi^*$ (and therefore $\Psi$) is completely positive, $\Phi$ is positive, $\nu_m(\Phi)=m-1$, and $\nu_m(\Phi \circ \Psi)=k(m-1).$ Letting $n\geq 4$ and $m=n$ gives $\nu_n(\Phi \circ \Psi) > \nu_n(\Phi)$.
    
    \item The non-$m$-positive dimension of a positive map is not necessarily equal to that of its adjoint. Indeed, for $\Psi$ as in the above bullet, $\nu_m(\Phi \circ \Psi)=k(m-1)$, but \begin{align*}
{\nu_m((\Phi \circ \Psi)^*)=\nu_m(\Psi^* \circ \Phi^*)\leq \nu_m(\Phi^*)=\nu_m(\Phi)=m-1}
\end{align*}
(and in fact, $\nu_m(\Psi^* \circ \Phi^*)=m-1$).

\end{itemize}

\begin{proposition}\label{prop:limit_exists}
For any $\Phi \in \HP(M_n,M_n)$, the limit in Definition \ref{defn:non_cp_dim} exists.
\begin{proof}
    First, note that in the case that $\Phi$ is completely positive, $\nu_m(\Phi) = 0$ for all $m$, in which case the limit clearly exists. Hence, assume that $\Phi$ is not completely positive, which in particular implies that $\nu_m(\Phi) > 0$ for at least some $m$ (and hence all further values of $m$). As $\nu_m(\Phi) \leq (m-1)(n-1)$ in general, the sequence $\nu_m(\Phi)/m$ resides in the interval $[0,n-1)$, and so $\alpha = \sup_m \nu_m(\Phi)/m < \infty$. We will show that in fact $\lim_{m\rightarrow \infty} \nu_m(\Phi)/m = \alpha$.
    
    Let $\epsilon > 0$. By definition of the supremum, there exists $m_0$ for which $\alpha - \nu_{m_0}(\Phi)/m_0 < \epsilon / 2$. Consider $m > m_0$, and let $k$ be a natural number for which $km_0 \leq m < (k+1)m_0$. It holds that
    \begin{equation}
        \frac{\nu_m(\Phi)}{m} > \frac{\nu_{km_0}(\Phi)}{(k+1)m_0} \geq \frac{k}{k+1} \frac{\nu_{m_0}(\Phi)}{m_0}.
    \end{equation}
    Hence, by choosing $k_0$ large enough, we may ensure that $\nu_m(\Phi)/m > v_{m_0}/m_0 - \epsilon/2$ for all $m \geq k_0m_0$. In particular, this implies that for all $m \geq k_0m_0$, it holds that
    \begin{equation}
    	\alpha - \nu_m(\Phi)/m < \alpha - \nu_{m_0}(\Phi)/m_0 + \epsilon/2 < \epsilon,
     \end{equation}and hence $\lim_{m \rightarrow \infty} \nu_m(\Phi)/m = \alpha$.
\end{proof}
\end{proposition}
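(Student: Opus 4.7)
The plan is to apply a Fekete-style argument to the sequence $a_m \defeq \nu_m(\Phi)/m$, using the two structural facts already established in Section~\ref{sec:bounds_properties}: the upper bound $\nu_m(\Phi) \leq (m-1)(n-1)$, and the superadditivity relation $\nu_{km}(\Phi) \geq k\,\nu_m(\Phi)$ coming from the block-diagonal embedding described just before Definition~\ref{defn:non_cp_dim}. Together with the monotonicity $\nu_m(\Phi) \geq \nu_k(\Phi)$ for $m \geq k$, these are enough to pin down the limit.

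First I would dispose of the trivial case: if $\Phi$ is completely positive then $\nu_m(\Phi) = 0$ for every $m$, and the limit is $0$. So assume $\Phi$ is not completely positive, and note that by the upper bound the sequence $a_m$ lies in $[0,n-1)$, so $\alpha \defeq \sup_{m \geq 1} a_m$ is finite. The goal is then to show $a_m \to \alpha$; since $\limsup_m a_m \leq \alpha$ is automatic from the definition of the supremum, it suffices to prove $\liminf_m a_m \geq \alpha$.

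To this end, fix $\epsilon > 0$ and choose $m_0$ with $a_{m_0} > \alpha - \epsilon/2$. For arbitrary $m$, write $k = \lfloor m/m_0 \rfloor$, so that $km_0 \leq m < (k+1)m_0$. Monotonicity gives $\nu_m(\Phi) \geq \nu_{km_0}(\Phi)$, and superadditivity gives $\nu_{km_0}(\Phi) \geq k\,\nu_{m_0}(\Phi)$. Combining these,
\[
    a_m = \frac{\nu_m(\Phi)}{m} \geq \frac{k\,\nu_{m_0}(\Phi)}{(k+1)m_0} = \frac{k}{k+1}\,a_{m_0}.
\]
Since $k \to \infty$ as $m \to \infty$, for all sufficiently large $m$ we have $\frac{k}{k+1}\,a_{m_0} > a_{m_0} - \epsilon/2 > \alpha - \epsilon$. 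Thus $\liminf_m a_m \geq \alpha$, which together with $\limsup_m a_m \leq \alpha$ yields $\lim_m a_m = \alpha$.

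I do not anticipate any serious obstacle: the substance of the argument is already packaged in the superadditivity bullet, and the only bookkeeping required is to absorb the remainder $m - km_0 < m_0$, which monotonicity handles cleanly. The one thing worth double-checking is that the superadditivity estimate $\nu_{km}(\Phi) \geq k \nu_m(\Phi)$ is genuinely unconditional (it is, since the construction via the isometries $P_i$ makes no assumption on $\Phi$ beyond Hermiticity-preservation), so the argument applies to every $\Phi \in \HP(M_n,M_n)$ as claimed.
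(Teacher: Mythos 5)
Your proof is correct and follows essentially the same route as the paper's: both reduce to the non-CP case, set $\alpha = \sup_m \nu_m(\Phi)/m$ (finite by the bound $\nu_m(\Phi) \leq (m-1)(n-1)$), and use monotonicity together with the superadditivity $\nu_{km}(\Phi) \geq k\nu_m(\Phi)$ to show $\nu_m(\Phi)/m \geq \tfrac{k}{k+1}\nu_{m_0}(\Phi)/m_0$ and hence that the limit equals $\alpha$. The only difference is cosmetic: you phrase the conclusion via $\liminf$/$\limsup$, which is a slightly cleaner bookkeeping of the same estimate.
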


We now start proving some less obvious bounds on $\nu_m(\Phi)$ and $\nu(\Phi)$. We start by showing that, when computing $\nu_m(\Phi^*)$, it suffices to only consider how many negative eigenvalues are produced by states $\rho \in \D(M_m,M_n)$ with the condition that $\tr_2(\rho) = I_m/m$.

\begin{lemma}\label{lemma:k_neg_eigenvalues}
    Suppose $\Phi \in \HP(M_n,M_n)$. The following are equivalent.
    \begin{enumerate}
        \item For all $\rho \in \D(M_m \otimes M_n)$, $(I_m \otimes \Phi)(\rho)$ has at most $k$ negative eigenvalues.
        \item For all $\rho \in \D(M_m \otimes M_n)$ with $\tr_2(\rho) = I_m/m$, $(I_m \otimes \Phi)(\rho)$  has at most $k$ negative eigenvalues.
    \end{enumerate}
\end{lemma}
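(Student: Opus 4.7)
The implication $(1) \Rightarrow (2)$ is trivial, since condition (2) is just the restriction of condition (1) to a subset of states. The plan is therefore to prove the contrapositive of $(2) \Rightarrow (1)$: assuming some $\rho \in \D(M_m \otimes M_n)$ yields more than $k$ negative eigenvalues in $(I_m \otimes \Phi)(\rho)$, I will produce a state $\tilde\rho$ with $\tr_2(\tilde\rho) = I_m/m$ for which $(I_m \otimes \Phi)(\tilde\rho)$ also has more than $k$ negative eigenvalues.

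The main tool is a local filtering operation on the first system. Given $\rho$ with $\sigma \defeq \tr_2(\rho)$ invertible, set $A \defeq \sigma^{-1/2}/\sqrt{m}$ and define
\begin{align*}
    \tilde\rho \defeq (A \otimes I_n)\rho(A^* \otimes I_n).
\end{align*}
A direct computation using $\tr_2\big((A \otimes I_n)\rho(A^* \otimes I_n)\big) = A \sigma A^*$ shows that $\tr_2(\tilde\rho) = I_m/m$, and since $\tr(\tilde\rho) = \tr(\tr_2(\tilde\rho)) = 1$ and $\tilde\rho \succeq O$, it is a valid density matrix. Because the filter acts only on the first system, it commutes with $I_m \otimes \Phi$, so
\begin{align*}
    (I_m \otimes \Phi)(\tilde\rho) = (A \otimes I_n)\,(I_m \otimes \Phi)(\rho)\,(A^* \otimes I_n).
\end{align*}
As $A \otimes I_n$ is invertible, this is a Hermitian congruence, so Sylvester's law of inertia ensures that $(I_m \otimes \Phi)(\tilde\rho)$ has the same number of negative eigenvalues as $(I_m \otimes \Phi)(\rho)$.

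The main obstacle is handling the case where $\sigma = \tr_2(\rho)$ is not invertible, so that $\sigma^{-1/2}$ is undefined. I would resolve this by a perturbation argument: fix any state $\omega \in \D(M_n)$ and define
\begin{align*}
    \rho_\epsilon \defeq (1-\epsilon)\rho + \epsilon \, (I_m/m) \otimes \omega.
\end{align*}
Then $\tr_2(\rho_\epsilon) = (1-\epsilon)\sigma + (\epsilon/m)I_m$ is strictly positive definite for every $\epsilon > 0$, so the filtering construction above applies to each $\rho_\epsilon$. Moreover, since the eigenvalues of a Hermitian matrix depend continuously on its entries, the strictly negative eigenvalues of $(I_m \otimes \Phi)(\rho)$ persist as strictly negative eigenvalues of $(I_m \otimes \Phi)(\rho_\epsilon)$ for all sufficiently small $\epsilon > 0$. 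Applying the filtering construction to such a $\rho_\epsilon$ produces a state $\tilde\rho_\epsilon$ with $\tr_2(\tilde\rho_\epsilon) = I_m/m$ and with at least as many negative eigenvalues under $I_m \otimes \Phi$ as $\rho$ originally had, which contradicts (2) and completes the proof.
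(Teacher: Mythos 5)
Your proposal is correct and follows essentially the same route as the paper's proof: a local filter $A = \sigma^{-1/2}/\sqrt{m}$ on the first system combined with Sylvester's law of inertia for the full-rank case, and a perturbation-plus-continuity-of-eigenvalues argument for the degenerate case. The only cosmetic differences are that you argue by contrapositive and apply the continuity of eigenvalues to the perturbed family $\rho_\epsilon$ rather than to the limit $\lambda \to 0$, which changes nothing of substance.
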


\begin{proof}
    It is enough to show that statement~$2$ implies statement~$1$, so we assume throughout this proof that statement~$2$ holds. Let $\rho \in \D(M_m \otimes M_n)$, and first consider the case where the reduced density matrix $\sigma=\tr_2(\rho)$ has full rank. Letting $A = \sigma^{-\frac{1}{2}}/\sqrt{m}$, it holds that $\rho^\prime = (A^* \otimes I_n)\rho(A \otimes I_n)$ is a density matrix with $\tr_2(\rho^\prime) = I_m/m$. By statement 2, we know that $(I_m \otimes \Phi)(\rho^\prime)$ has at most $k$ negative eigenvalues. However, it holds that
    \begin{equation}
        (I_m \otimes \Phi)(\rho^\prime) = (I_m \otimes \Phi)( (A^* \otimes I_n)\rho(A \otimes I_n)) = (A^* \otimes I_n)(I_m \otimes \Phi)(\rho)(A \otimes I_n),
    \end{equation}
    and as $A$ is invertible, Sylvester's Law of Inertia tells us that $(I_m \otimes \Phi)(\rho)$ has the same number of negative eigenvalues as $(I_m \otimes \Phi)(\rho^\prime)$: at most $k$.
    
    Next, consider a general density matrix $\rho \in \D(M_m \otimes M_n)$, with not-necessarily-invertible partial trace. Denote $\rho_\lambda = (1-\lambda)\rho + \lambda I_m \otimes I_n/mn$. Then $\rho_\lambda$ is a density matrix for all $\lambda \in [0,1]$ and $\tr_2(\rho_\lambda)$ has full rank for all $\lambda \in (0,1]$. By the previous case, statement $2$ implies that $(I_m \otimes \Phi)(\rho_\lambda)$ has at most $k$ negative eigenvalues for all $\lambda > 0$. By the continuity of the $(k + 1)^{th}$ eigenvalue (when ordered in non-decreasing order), it follows that $(I_m \otimes \Phi)(\rho) = \lim_{\lambda \rightarrow 0}(I_m \otimes \Phi)(\rho_\lambda)$ has at most $k$ negative eigenvalues.
\end{proof}

We now present a lemma that provides a non-trivial upper bound on $\nu_m(\Phi)$ and $\nu(\Phi)$ for a wide variety of positive linear maps $\Phi$, based on the distance between $\Phi$ and the \emph{completely depolarizing map} $\Delta \in \HP(M_n,M_n)$ defined by
\[
    \Delta(X) = \tr(X)I/n \quad \text{for all} \quad X \in M_n.
\]



\begin{lemma}\label{lem:diamond_distance_bound}
    Suppose $\Phi \in \HP(M_n,M_n)$ and $x > 0$ is a scalar. Let $\displaystyle \ell(\Phi) = \min_{\rho \in \D(M_n)} \big\{ \tr(\Phi(\rho)) \big\}$. Then
    \[
        \nu_m(\Phi^*) \leq \left\lceil \frac{mn(x + d_m(x\Delta,\Phi) - \ell(\Phi))}{2x} \right\rceil - 1.
    \]
    In particular,
    \[
        \nu(\Phi^*) \leq \frac{n(x + d_{\diamond}(x\Delta,\Phi) - \ell(\Phi))}{2x}.
    \]
\end{lemma}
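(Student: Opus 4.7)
The plan is to combine Theorem~\ref{thm:neg_evals_subspace_corr} and Lemma~\ref{lemma:k_neg_eigenvalues} to reduce the claim to a direct spectral bound on a Hermitian ``error'' matrix. Applying Theorem~\ref{thm:neg_evals_subspace_corr} with $\Phi^*$ in place of $\Phi$ shows that $\nu_m(\Phi^*)$ is exactly the maximum number of strictly negative eigenvalues of $(I_m \otimes \Phi)(\rho)$ over $\rho \in \D(M_m \otimes M_n)$, and Lemma~\ref{lemma:k_neg_eigenvalues} lets us restrict to states satisfying $\tr_2(\rho) = I_m/m$. This reduction is essential, because for such $\rho$ the completely depolarizing map acts as $(I_m \otimes \Delta)(\rho) = \tr_2(\rho) \otimes I_n/n = I_{mn}/(mn)$, enabling a clean decomposition of $(I_m \otimes \Phi)(\rho)$ into a scalar matrix plus a perturbation.

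Concretely, I would write
\begin{equation*}
    (I_m \otimes \Phi)(\rho) = \frac{x}{mn} I_{mn} + Y, \qquad Y := (I_m \otimes (\Phi - x\Delta))(\rho),
\end{equation*}
and observe that a strictly negative eigenvalue of $(I_m \otimes \Phi)(\rho)$ corresponds to an eigenvalue of $Y$ that is strictly less than $-x/(mn)$. Letting $S_+$ and $S_-$ denote the sums of the positive eigenvalues and of the absolute values of the negative eigenvalues of $Y$, respectively, $k$ strictly negative eigenvalues in $(I_m \otimes \Phi)(\rho)$ then forces the strict inequality $S_- > kx/(mn)$.

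The heart of the argument is to bound $S_-$ from above using two complementary estimates on $Y$. The trace-norm estimate gives
\begin{equation*}
    S_+ + S_- = \|Y\|_1 \leq \|I_m \otimes (\Phi - x\Delta)\|_1 \cdot \|\rho\|_1 = d_m(x\Delta,\Phi),
\end{equation*}
while expanding $\rho = \sum_{i,j}\ketbra{i}{j} \otimes \rho_{ij}$ in the standard basis of the first tensor factor yields the trace identity $\tr((I_m \otimes \Phi)(\rho)) = \tr(\Phi(\tr_1(\rho))) \geq \ell(\Phi)$, so $S_+ - S_- = \tr(Y) \geq \ell(\Phi) - x$. Subtracting these two inequalities eliminates $S_+$ and gives $2S_- \leq x + d_m(x\Delta,\Phi) - \ell(\Phi)$; combining with $kx/(mn) < S_-$ produces
\begin{equation*}
    k < \frac{mn(x + d_m(x\Delta,\Phi) - \ell(\Phi))}{2x},
\end{equation*}
which for integer $k$ is equivalent to the stated ceiling-minus-one bound on $\nu_m(\Phi^*)$. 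The second inequality follows by dividing by $m$ and letting $m \to \infty$: since $d_m(x\Delta,\Phi)$ stabilizes at $d_\diamond(x\Delta,\Phi)$ for $m \geq n$, both the ceiling and the ``$-1$'' vanish in the limit.

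I do not foresee a serious obstacle; the argument is essentially a two-constraint linear program in the $(S_+, S_-)$ plane. The only bookkeeping that requires care is the partial-trace identity $\tr((I_m \otimes \Phi)(\rho)) = \tr(\Phi(\tr_1(\rho)))$ and tracking the strict versus non-strict inequality that lets the integer $k$ satisfy $k \leq \lceil y \rceil - 1$ rather than merely $k \leq \lfloor y \rfloor$.
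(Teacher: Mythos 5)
Your proposal is correct and follows essentially the same route as the paper's proof: restrict to $\tr_2(\rho) = I_m/m$ via Lemma~\ref{lemma:k_neg_eigenvalues}, split $(I_m\otimes\Phi)(\rho)$ into $\tfrac{x}{mn}I_{mn}$ plus the perturbation $(I_m\otimes(\Phi - x\Delta))(\rho)$, and bound the relevant eigenvalue sum by combining the trace-norm bound $d_m(x\Delta,\Phi)$ with the trace bound coming from $\ell(\Phi)$. The only (immaterial) difference is a sign convention—you bound the negative part of $Y$ where the paper bounds the positive part of $x\Delta-\Phi$ applied to $\rho$—and you spell out the reduction via Theorem~\ref{thm:neg_evals_subspace_corr} and the strict-inequality-to-ceiling step a bit more explicitly than the paper does.
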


Before proving this lemma, we note that the quantity $\ell(\Phi)$ can be computed straightforwardly via semidefinite programming (see \cite{Wat18,BV04} for an introduction to the subject) and that $\ell(I_m \otimes \Phi) = \ell(\Phi)$ for all $m \geq 1$. The distance $d_m(x\Delta,\Phi)$ in general is difficult to compute (see \cite{JohCompute} for a randomized method that can produce lower bounds on it), but when $m \geq n$ it is just a diamond norm computation that can also be carried out efficiently via semidefinite programming \cite{Wat09,Wat13}. In particular, we have $d_m(x\Delta,\Phi) \leq d_\diamond(x\Delta,\Phi)$ for all $m$, with equality when $m \geq n$.

We also note that the scalar $x$ can be chosen freely in this lemma, and some choices of $x$ may be better than others. In practice, the optimal value of $x$ in the bound on $\nu(\Phi^*)$ can be found simply by computing that bound (via semidefinite programming) for several different values of $x$.

\begin{proof}[Proof of Lemma~\ref{lem:diamond_distance_bound}]
    Define $\Psi \in \HP(M_n,M_n)$ by $\Psi(X) = x\Delta(X) - \Phi(X)$, so that $\|\Psi\|_m = d_m(x\Delta,\Phi)$. By Lemma~\ref{lemma:k_neg_eigenvalues} we can assume that $\tr_2(\rho) = I_m/m$, so that
    \[
        (I_m \otimes \Phi)(\rho) = \frac{x}{mn}(I_m \otimes I_n) - (I_m \otimes \Psi)(\rho).
    \]
    
    We conclude that the sum of the positive eigenvalues of $(I_m \otimes \Psi)(\rho)$ is no larger than $(\|\Psi\|_m + (x - \ell(\Phi)))/2$ (since its trace norm is no larger than $\|\Psi\|_m$ and its trace is no larger than $x - \ell(\Phi)$). It follows that $(I_m \otimes \Psi)(\rho)$ must have strictly fewer than
    \[
        \frac{mn}{x}\left(\frac{\|\Psi\|_m + (x - \ell(\Phi))}{2}\right)
    \]
    eigenvalues that are strictly bigger than $x/(mn)$, so $(I_m \otimes \Phi)(\rho)$ can have no more than this many negative eigenvalues. Since the number of negative eigenvalues of $(I_m \otimes \Phi)(\rho)$ is an integer and this inequality is strict, we get exactly the bound described by the lemma.
\end{proof}

\section{Results for Certain Specific Positive Maps}\label{sec:specific_maps}

We now compute and bound the non-CP ratio (and the non-$m$-positive dimension) of some well-known positive linear maps from quantum information theory. We remind the reader that it is already known that if $T \in \HP(M_n,M_n)$ is the transpose map then $\nu_m(T) = (m-1)(n-1)$ and $\nu(T) = n-1$, which are the maximum values that $\nu_m$ and $\nu$ can take on positive maps.

\subsection{The Reduction Map}\label{sec:k_red_map}

Suppose $1 \leq k < n$ and consider the linear map $R_k \in \HP(M_n,M_n)$ defined by
\begin{align}\label{eq:kred_map}
    R_k(X) = k\tr(X)I_n - X.
\end{align}
We note that this map is known to be $k$-positive but not $(k+1)$-positive \cite{Tom85}. In the $k = 1$ case it is called the \emph{reduction map} \cite{CAG99,HH99}, and it is important in quantum information theory because it not only can be used to detect entanglement in quantum states (like all positive but not completely positive maps), but it also has the property that if $(I_n \otimes R_1)(\rho) \not\succeq O$ then $\rho$ is distillable (i.e., a pure Bell state can be distilled from many copies of $\rho$).

We now consider the problem of computing $\nu_m(R_k)$ and $\nu(R_k)$. That is, we answer the question of how large a subspace $\mathcal{S} \subseteq \C^m \otimes \C^n$ can be if it has the property that every state $\rho$ with support in $\mathcal{S}$ is such that $(I_m \otimes R_k)(\rho) \not\succeq O$ (in the $k = 1$ case, we say that such a state violates the \emph{reduction criterion}). Equivalently, since $R_k = R_k^*$, this answers the question of how many negative eigenvalues $(I_m \otimes R_k)(\rho)$ can have if $\rho \in \D(M_m \otimes M_n)$.

\begin{theorem}\label{thm:red_nminus1_neg_eigs}
    Suppose $1 \leq k \leq n$, $m \geq 1$, and $R_k \in \HP(M_n,M_n)$ is defined as in Equation~\eqref{eq:kred_map}. Then
    \[
        \nu_m(R_k) = \lceil m/k \rceil - 1 \quad \text{and} \quad \nu(R_k) = 1/k.
    \]
\end{theorem}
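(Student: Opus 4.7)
The plan is to prove matching upper and lower bounds on $\nu_m(R_k)$ and then derive $\nu(R_k)$ as a limit. For the upper bound, I would apply Lemma~\ref{lem:diamond_distance_bound} with $\Phi = R_k$, noting that $R_k^* = R_k$. The key manipulation is to write $R_k(X) = kn\Delta(X) - X$, so that choosing $x = kn$ in the lemma makes $x\Delta - R_k$ equal to the identity map, whose induced trace norm is $1$ at every level $m$. Since $\tr(R_k(\rho)) = (kn-1)\tr(\rho)$, we have $\ell(R_k) = kn - 1$. Substituting these values yields
\[
\nu_m(R_k) \leq \left\lceil \frac{mn(kn + 1 - (kn - 1))}{2kn}\right\rceil - 1 = \left\lceil \frac{m}{k} \right\rceil - 1.
\]

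For the lower bound, by Lemma~\ref{lemma:k_neg_eigenvalues} I may restrict attention to $\rho \in \D(M_m \otimes M_n)$ with $\tr_2(\rho) = I_m/m$. For such $\rho$, $(I_m \otimes R_k)(\rho) = (k/m)I_{mn} - \rho$, so the number of negative eigenvalues equals the number of eigenvalues of $\rho$ strictly exceeding $k/m$. The natural candidate is $\rho = P_\mathcal{S}/j$, where $j = \lceil m/k \rceil - 1$ and $P_\mathcal{S}$ is the orthogonal projection onto a $j$-dimensional subspace $\mathcal{S} \subseteq \C^m \otimes \C^n$ with uniform marginal $\tr_2(P_\mathcal{S}) = (j/m)I_m$; such $\rho$ automatically has $\tr_2(\rho) = I_m/m$ and $j$ eigenvalues equal to $1/j$, each strictly greater than $k/m$ because $j < m/k$, producing exactly $j$ negative eigenvalues. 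When $m \leq n$, I would take $\mathcal{S}$ as the span of $j$ phase-shifted generalized Bell states $\ket{\phi_\ell} = (1/\sqrt{m})\sum_i \omega^{i\ell}\ket{i}\otimes\ket{w_i}$ with $\omega = e^{2\pi i/m}$ and $\{\ket{w_i}\}$ any orthonormal set in $\C^n$; each such $\ket{\phi_\ell}$ individually has $\tr_2(\ketbra{\phi_\ell}{\phi_\ell}) = I_m/m$, so any $j$ of them give the desired uniform marginal.

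The main obstacle will be the $m > n$ case of the lower bound: since any pure-state partial trace has rank at most $n$, no single pure state can satisfy $\tr_2 = I_m/m$, so the $j$ orthonormal pure states spanning $\mathcal{S}$ must have distinct rank-$\leq n$ marginals that collectively sum to $(j/m)I_m$. I would construct these by first decomposing $(j/m)I_m$ as a sum of $j$ rank-$\leq n$ trace-$1$ positive semidefinite matrices (with supports and eigenvalues chosen so that each admits a Schmidt lifting to a unit vector in $\C^m \otimes \C^n$), then invoking a Gram-matrix argument to rotate the second Schmidt factors so that the resulting pure states become mutually orthogonal. The feasibility of this decomposition for all $(m,n,k)$ with $k \leq n-1$ is consistent with the Haar-average $\mathbb{E}[\tr_2(P_\mathcal{S})] = (j/m)I_m$ over random $j$-dimensional subspaces, and can be checked by a direct dimension/parameter count. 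Finally, the regularization is immediate: $\nu(R_k) = \lim_{m\to\infty}(\lceil m/k\rceil - 1)/m = 1/k$.
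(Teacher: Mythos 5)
Your upper bound is exactly the paper's argument: apply Lemma~\ref{lem:diamond_distance_bound} with $x=kn$, note that $x\Delta-R_k$ is the identity map and $\ell(R_k)=kn-1$, and simplify to $\lceil m/k\rceil-1$. Your lower bound for $m\le n$ also matches the paper, which takes $\rho$ to be a uniform mixture of $j=\lceil m/k\rceil-1$ mutually orthogonal maximally entangled states; your phase-shifted Bell states are an explicit such family, and the count of eigenvalues of $\rho$ exceeding $k/m$ is correct since $jk<m$.

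The gap is in the $m>n$ case, and your proposed repair does not close it. You insist on $\rho=P_{\mathcal{S}}/j$ with $\tr_2(P_{\mathcal{S}})=(j/m)I_m$; but each pure state of $\C^m\otimes\C^n$ has a marginal of rank at most $n$, so $\rank(\tr_2(P_{\mathcal{S}}))\le jn$, and the required subspace simply does not exist whenever $jn<m$. This happens, e.g., for $n=3$, $k=2$, $m=4$, where $j=1$ and you would need a single pure state with marginal $I_4/4$. So the ``feasibility for all $(m,n,k)$ with $k\le n-1$'' that you defer to a parameter count is false as stated. The fix is short but must be said: when $jn<m$, set $m'=jk+1$; then $m'\le m$, $\lceil m'/k\rceil-1=j$, and $jn\ge m'$, so one can build the witness in $\C^{m'}\otimes\C^n$ and embed it, using the monotonicity $\nu_m(R_k)\ge\nu_{m'}(R_k)$. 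Separately, even when $jn\ge m$, your orthogonalization step (``a Gram-matrix argument to rotate the second Schmidt factors'') is asserted rather than proved: you need to actually exhibit $j$ mutually orthogonal unit vectors with prescribed rank-$\le n$ marginals summing to $(j/m)I_m$, which is true but requires a concrete construction (e.g., Fourier phases on overlapping blocks). For what it is worth, the paper's own proof phrases the lower bound only in terms of ``mutually orthogonal maximally entangled states'' and is silent on $m>n$, so you have correctly located the delicate point --- you just have not resolved it.
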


\begin{proof}
    The fact that $\nu(R_k) = 1/k$ follows immediately by the formula for $\nu_m(R_k)$, so we focus solely on the latter quantity.
    
    We start by showing that $\nu_m(R_k) \geq \lceil m/k \rceil - 1$ by constructing states $\rho \in \D(M_m \otimes M_n)$ for which $k\rho_1 \otimes I_n - \rho$ has $\lceil m/k \rceil -1$ negative eigenvalues. To construct such a state, let $\{\ket{u_1},\ket{u_2},\ldots,\ket{u_{\lceil m/k\rceil-1}}\}$ be any set consisting of $\lceil m/k\rceil-1$ mutually orthogonal maximally entangled pure states in $\C^m \otimes \C^n$. Then then state
    \[
        \rho := \frac{1}{\lceil m/k\rceil-1} \sum_{i=1}^{\lceil m/k\rceil-1} \ketbra{u_i}{u_i}
    \]
    satisfies
    \[
        (I_m \otimes R_k)(\rho) = \frac{k}{m}I_m \otimes I_n - \frac{1}{\lceil m/k\rceil-1}\sum_{i=1}^{\lceil m/k\rceil-1} \ketbra{u_i}{u_i},
    \]
    which has $\lceil m/k\rceil-1$ negative eigenvalues (since $1/(\lceil m/k\rceil-1) > k/m$).
    
    To see that $\nu_m(R_k) \leq \lceil m/k \rceil - 1$, we apply Lemma~\ref{lem:diamond_distance_bound} with $x = kn$. It is straightforward to show that $x\Delta - R_k$ is the identity map (so $d_m(x\Delta,R_k) = 1$) and $\tr(R_k(\rho)) = kn - 1$ for all $\rho \in \D(M_n)$, so (in the notation of the lemma) we have $\ell(R_k) = kn - 1$ and thus
    \[
        \nu_m(R_k^*) = \nu_m(R_k) \leq \left\lceil \frac{mn(kn + 1 - (kn-1))}{2kn} \right\rceil - 1 = \lceil m/k \rceil - 1,
    \]
    as claimed.
\end{proof}

It is worth mentioning that, although the maps $R_k$ are typically only considered when $k$ is an integer, these results still work even if it is not. For example, if $m = n$ and $k = n/(n-1)$ then $R_k$ is $\lfloor n/(n-1) \rfloor = 1$-positive and $(I_n \otimes R_k)(\rho)$ has at most $\lceil n/(n/(n-1)) \rceil - 1 = n-2$ negative eigenvalues.

\subsection{The Choi Map}\label{sec:choi_map}

The \emph{Choi map} \cite{Cho75b} is the positive map $\Phi_C \in \HP(M_3,M_3)$ that is defined by
\begin{align}\label{eq:choi_map}
	\Phi_{C}(X) \defeq \begin{bmatrix}
		x_{1,1} + x_{2,2} & -x_{1,2} & -x_{1,3} \\
		-x_{2,1} & x_{2,2} + x_{3,3} & -x_{2,3} \\
		-x_{3,1} & -x_{3,2} & x_{3,3} + x_{1,1}
	\end{bmatrix} \quad \text{for all} \quad X \in M_3.
\end{align}
This map is interesting for the fact that it was the first known (and is still the simplest known) example of a positive map that is \emph{indecomposable}, meaning that it is capable of detecting entanglement in some PPT states. Since $\Phi_C$ is positive, we know immediately that $\nu_m(\Phi_C) \leq (m-1)(n-1) = 2m-2$ so $\nu(\Phi_C) \leq 2$. We now provide a slightly better bound.

\begin{theorem}\label{thm:choi_map_bound}
    Suppose $m \geq 1$ and $\Phi_C \in \HP(M_3,M_3)$ is the Choi map defined in Equation~\eqref{eq:choi_map}. Then
    \[
        \nu_m(\Phi_C) = \nu_m(\Phi_C^*) \leq \left\lceil \frac{5m}{3} \right\rceil - 1 \quad \text{and} \quad \nu(\Phi_C) = \nu(\Phi_C^*) \leq 5/3.
    \]
\end{theorem}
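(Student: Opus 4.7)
The plan is to apply Lemma~\ref{lem:diamond_distance_bound} to $\Phi_C$, with the choice of scalar $x$ that minimizes the resulting bound. Three preliminary observations are needed. First, although $\Phi_C \neq \Phi_C^*$ (the adjoint sends $E_{11}\mapsto E_{11}+E_{22}$ rather than $E_{11}+E_{33}$, and similarly cyclically), the equality $\nu_m(\Phi_C) = \nu_m(\Phi_C^*)$ is special to this map. I would prove it by exhibiting the unitary $U$ that swaps $\ket{1}$ and $\ket{3}$ and fixes $\ket{2}$, and verifying directly on the basis $\{E_{ij}\}$ that $U\Phi_C(U X U)U = \Phi_C^*(X)$. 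Since $\nu_m$ is clearly invariant under pre- and post-composition by unitary conjugations, the two quantities coincide. This step is important because the non-$m$-positive dimension need not in general be preserved under taking adjoints (as the paper itself notes earlier).

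Second, I would compute $\ell(\Phi_C) = \min_{\rho} \tr(\Phi_C(\rho))$. Reading the defining formula~\eqref{eq:choi_map} one sees that $\tr(\Phi_C(X)) = 2(x_{1,1}+x_{2,2}+x_{3,3}) = 2\tr(X)$, so $\tr(\Phi_C(\rho)) = 2$ for every density matrix $\rho$, giving $\ell(\Phi_C) = 2$.

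Third, with $n=3$ and $\ell(\Phi_C) = 2$, Lemma~\ref{lem:diamond_distance_bound} applied to $\Phi_C$ (using $\nu_m(\Phi_C) = \nu_m(\Phi_C^*)$) reads
\[
    \nu_m(\Phi_C) \leq \left\lceil \frac{3m\bigl(x + d_m(x\Delta,\Phi_C) - 2\bigr)}{2x}\right\rceil - 1.
\]
To obtain the claimed bound $\lceil 5m/3\rceil - 1$, it suffices to exhibit a value $x > 0$ for which
\[
    d_m(x\Delta,\Phi_C) \;\leq\; \tfrac{x}{9} + 2.
\]
The bound on $\nu(\Phi_C)$ then follows by taking $m \geq n = 3$ (so $d_m = d_\diamond$) and dividing by $m$.

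The main technical step, and the principal obstacle, is verifying this inequality for a specific $x$. Since $d_m(x\Delta,\Phi_C) \leq d_\diamond(x\Delta,\Phi_C)$ for all $m$, it suffices to work with the diamond norm, which, as observed after the lemma, can be computed by semidefinite programming \cite{Wat09,Wat13}. I would search numerically over $x$ to find the minimizer of $3(x + d_\diamond(x\Delta,\Phi_C) - 2)/(2x)$; the cleanness of the value $5/3$ suggests an exact optimum at a rational $x$ (e.g.\ $x = 9$ giving $d_\diamond = 3$, or a nearby value). To turn the numerical SDP certificate into a proof I would extract an explicit dual feasible solution from the SDP for the diamond norm and verify analytically that it gives the required upper bound on $d_\diamond(x\Delta,\Phi_C)$; combined with a matching primal feasible solution this certifies the inequality. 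Plugging $d_\diamond \leq x/9 + 2$ into the displayed estimate then yields $\lceil 5m/3\rceil - 1$ directly, and the regularized version $\nu(\Phi_C) \leq 5/3$ follows immediately from Definition~\ref{defn:non_cp_dim}.
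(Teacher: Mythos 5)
Your proposal follows essentially the same route as the paper: it applies Lemma~\ref{lem:diamond_distance_bound} with $\ell(\Phi_C)=2$, correctly reduces the claim to finding $x>0$ with $d_\diamond(x\Delta,\Phi_C)\le x/9+2$, and defers that inequality to an SDP computation --- the paper takes $x=3$, for which the SDP yields $d_\diamond(3\Delta,\Phi_C)=7/3=3/9+2$ exactly. Your explicit symmetry argument for $\nu_m(\Phi_C)=\nu_m(\Phi_C^*)$ via conjugation by the $\ket{1}\leftrightarrow\ket{3}$ swap is correct and spells out what the paper leaves implicit.
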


\begin{proof}
    The fact that $\nu_m(\Phi_C) = \nu_m(\Phi_C^*)$ follows via a symmetry argument and the fact that $\Phi_C^*$ has the form
    \[
        \Phi_{C}^*(X) = \begin{bmatrix}
    		x_{1,1} + x_{3,3} & -x_{1,2} & -x_{1,3} \\
    		-x_{2,1} & x_{2,2} + x_{1,1} & -x_{2,3} \\
    		-x_{3,1} & -x_{3,2} & x_{3,3} + x_{2,2}
    	\end{bmatrix} \quad \text{for all} \quad X \in M_3.
    \]
    
    To compute the upper bound on $\nu_m(\Phi_C^*)$, we choose $x = 3$ in Lemma~\ref{lem:diamond_distance_bound}. Semidefinite programming then quickly shows that $d_\diamond(x\Delta,\Phi_C) = 7/3$. Similarly, for every $\rho \in \D(M_3)$ we have $\tr(\Phi_C(\rho)) = 2$, so $\ell(\Phi_C) = 2$. Plugging these details into Lemma~\ref{lem:diamond_distance_bound} gives
    \begin{align*}
        \nu_m(\Phi_C) & \leq \left\lceil \frac{mn(x + d_\diamond(x\Delta,\Phi_C) - \ell(\Phi_C))}{2x} \right\rceil - 1 = \left\lceil \frac{5m}{3} \right\rceil - 1.
    \end{align*}
    
    The fact that $\nu(\Phi_C^*) \leq 5/3$ then follows immediately from evaluating the limit in Definition~\ref{defn:non_cp_dim}.
\end{proof}

We note that we do not expect that the bounds on $\nu_m(\Phi_C)$ and $\nu(\Phi_C)$ provided by Theorem~\ref{thm:choi_map_bound} are tight. Numerics based on $10^6$ randomly-generated density matrices of various ranks (uniformly distributed according to Haar/Hilbert--Schmidt measure) in $\D(M_m \otimes M_n)$, for each value of $2 \leq m \leq 10$, suggest that $\nu_m(\Phi_C) \leq m-1$ and thus $\nu(\Phi_C) \leq 1$. In fact, even these bounds may not be tight---when $m = 10$, for example, we could not find a state $\rho \in \D(M_m \otimes M_n)$ for which $(I_m \otimes \Phi_C)(\rho)$ has more than $8$ negative eigenvalues. It thus may be the case that $\nu(\Phi_C) < 1$ (or it may just be the case that it's difficult to find the states $\rho$ for which $(I_m \otimes \Phi_C)(\rho)$ has many negative eigenvalues).

\subsection{The Breuer--Hall Map}

When $n \geq 4$ is an even integer, the \emph{Breuer--Hall map} $\Phi_B \in \HP(M_n,M_n)$ \cite{Bre06,Hal06} is defined by
\begin{align}\label{eq:BH_defn}
    \Phi_B(X) \defeq \Tr(X)I - X - UX^TU^*,
\end{align}
where $U$ is any fixed skew-symmetric unitary matrix (which is why $n$ must be even---such unitaries do not exist otherwise). For our purposes, we can assume that
\[
    U = \begin{bmatrix}
        0 & 1 & 0 & 0 & \cdots & 0 & 0 \\
        -1 & 0 & 0 & 0 & \cdots & 0 & 0 \\
        0 & 0 & 0 & 1 & \cdots & 0 & 0 \\
        0 & 0 & -1 & 0 & \cdots & 0 & 0 \\
        \vdots & \vdots & \vdots & \vdots & \ddots & \vdots & \vdots \\
        0 & 0 & 0 & 0 & \cdots & 0 & 1 \\
        0 & 0 & 0 & 0 & \cdots & -1 & 0
    \end{bmatrix},
\]
since every skew-symmetric unitary matrix is unitarily similar to this one, and thus $\nu_m(B)$ does not depend on which skew-symmetric unitary matrix is used.

We note that $\Phi_B = \Phi_B^*$ and $\Phi_B$ is known to be positive, so we can conclude immediately that $\nu_m(\Phi_B) \leq (m-1)(n-1)$ and $\nu(\Phi_B) \leq n-1$. We now present a better upper bound.

\begin{theorem}\label{thm:bh_map_nu}
    Suppose $n \geq 4$ is even, $m \geq 1$, and $\Phi_B \in \HP(M_n,M_n)$ is the Breuer--Hall map defined in Equation~\eqref{eq:BH_defn}. Then
    \[
        \nu_m(\Phi_B) \leq \frac{m}{2}(n+2) - 1 \quad \text{and} \quad \nu(\Phi_B) \leq n/2 + 1.
    \]
\end{theorem}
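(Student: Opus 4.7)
The plan is to apply Lemma~\ref{lem:diamond_distance_bound} to $\Phi = \Phi_B$ (which satisfies $\Phi_B^* = \Phi_B$, so $\nu_m(\Phi_B) = \nu_m(\Phi_B^*)$) with the specific choice $x = 2n$. This reduces the theorem to computing the two ingredients $\ell(\Phi_B)$ and $d_\diamond(2n\Delta, \Phi_B)$.

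The first ingredient is routine. For any $\rho \in \D(M_n)$, using $\tr(I) = n$, $\tr(\rho) = 1$, and $\tr(U \rho^T U^*) = \tr(\rho^T) = 1$ (by unitarity of $U$ and cyclicity of trace), one finds $\tr(\Phi_B(\rho)) = n - 1 - 1 = n - 2$. Thus $\ell(\Phi_B) = n - 2$.

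The essential step is the second ingredient, and the reason for choosing $x = 2n$ is that this is precisely the value that makes the auxiliary map $\Psi \defeq 2n\Delta - \Phi_B$ completely positive. A short algebraic simplification gives $\Psi(X) = \tr(X) I + X + U X^T U^*$, and direct computation of its Choi matrix yields
\[
J(\Psi) = I_{n^2} + n|\psi^+\rangle\langle\psi^+| + (I_n \otimes U) F (I_n \otimes U^*),
\]
where $F$ is the swap operator on $\C^n \otimes \C^n$. Since $F$ has eigenvalues $\pm 1$ and unitary conjugation preserves the spectrum, $(I_n \otimes U) F (I_n \otimes U^*) \succeq -I_{n^2}$, and hence $J(\Psi) \succeq n|\psi^+\rangle\langle\psi^+| \succeq O$, confirming that $\Psi$ is completely positive. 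Moreover, $\Psi$ is self-adjoint (as the difference of self-adjoint maps), so $\|\Psi\|_\diamond = \|\Psi^*(I)\| = \|\Psi(I)\|$. Evaluating $\Psi(I) = nI + I + UU^* = (n+2)I$ then gives $d_\diamond(2n\Delta, \Phi_B) = n+2$.

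Substituting these ingredients into Lemma~\ref{lem:diamond_distance_bound} yields
\[
\nu_m(\Phi_B) \leq \left\lceil \frac{mn(2n + (n+2) - (n-2))}{4n} \right\rceil - 1 = \left\lceil \frac{m(n+2)}{2} \right\rceil - 1 = \frac{m}{2}(n+2) - 1,
\]
where the last equality uses that $n+2$ is even (so the ceiling is vacuous), and the bound on $\nu(\Phi_B)$ then follows immediately from Definition~\ref{defn:non_cp_dim}. The main obstacle is identifying the right value of $x$: the naive choice $x = n$ cancels the $\tr(X) I$ summand but leaves $\Psi = I + T_U$, which is positive but not generally CP, and whose diamond norm can in fact be shown to strictly exceed $n$ for $n\geq 4$, yielding a weaker bound. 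The choice $x = 2n$ adds just enough completely positive mass (the term $\tr(X) I$) to dominate the non-CP piece $U X^T U^*$, reducing a generically hard SDP computation to an elementary operator-norm evaluation.
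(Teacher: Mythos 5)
Your proposal is correct and follows essentially the same route as the paper: apply Lemma~\ref{lem:diamond_distance_bound} with $x = 2n$, observe that $2n\Delta - \Phi_B$ is completely positive so that its induced trace norm equals $\|(2n\Delta-\Phi_B)^*(I)\| = n+2$, compute $\ell(\Phi_B) = n-2$, and substitute. The only cosmetic difference is that you certify complete positivity via the Choi matrix $I + n\ketbra{\psi^+}{\psi^+} + (I\otimes U)F(I\otimes U^*) \succeq O$, whereas the paper writes the map as the sum of the two completely positive maps $X \mapsto U(\tr(X)I + X^T)U^*$ and $X \mapsto X$.
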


When $n = 4$, this bound is trivial since it just says that $\nu_m(\Phi_B) \leq 3m - 1$ and $\nu(\Phi_B) \leq 3$, but we already knew that $\nu_m(\Phi_B) \leq (m-1)(n-1) = 3m - 3$ and $\nu(\Phi_B) \leq 3$. However, this bound is non-trivial for all $n \geq 6$, beating the easy bounds $\nu_m(\Phi_B) \leq (m-1)(n-1)$ and $\nu(\Phi_B) \leq n-1$ by roughly a multiplicative factor of $2$ when $m$ and $n$ are large.

\begin{proof}[Proof of Theorem~\ref{thm:bh_map_nu}]
    We choose $x = 2n$ in Lemma~\ref{lem:diamond_distance_bound}. It is then straightforward to check that $x\Delta - \Phi_B$ is completely positive, since
    \[
        (x\Delta - \Phi_B)(X) = U(\tr(X)I + X^T)U^* + X
    \]
    is the sum of two completely positive maps. It follows that
    \[
        d_m(x\Delta,\Phi_B) = \|(x\Delta - \Phi_B)^*(I)\| = \|U(nI + I)U^* + I\| = n+2.
    \]
    Similarly, for every $\rho \in \D(M_n)$ we have $\Phi_B(\rho) = I - \rho - U\rho^T U^*$, which has trace $n - 2$, so $\ell(\Phi_B) = n - 2$. Plugging these details into Lemma~\ref{lem:diamond_distance_bound} gives
    \begin{align*}
        \nu_m(\Phi_B^*) & \leq \left\lceil \frac{mn(x + d_m(x\Delta,\Phi_B) - \ell(\Phi_B))}{2x} \right\rceil - 1 \\
        & = \left\lceil \frac{mn(2n + (n+2) - (n-2))}{4n} \right\rceil - 1 = \frac{m}{2}(n+2) - 1.
    \end{align*}
    
    The bound on $\nu(\Phi_B)$ then follows immediately from using this bound on $\nu_m(\Phi_B)$ and evaluating the limit in Definition~\ref{defn:non_cp_dim}.
\end{proof}

Similarly to the bounds for the Choi map, we do not expect that the bounds provided by Theorem~\ref{thm:bh_map_nu} are tight (in fact, we already noted that the bound on $\nu_m(\Phi_B)$ \emph{cannot} be tight when $n = 4$). Numerics based on $10^6$ randomly-generated density matrices of various ranks for each value of $2 \leq m \leq 10$ and even $4 \leq n \leq 10$ (uniformly distributed according to Haar/Hilbert--Schmidt measure) in $\D(M_m \otimes M_n)$ suggest that $\nu_m(\Phi_B) \leq m$ and thus $\nu(\Phi_B) \leq 1$.

\section{Relationship With other Measures of Entanglement Detection}\label{sec:relationship_other_measures}

Since we are proposing that the non-CP ratio of a positive map $\Phi$ should be interpreted as a measure of how good $\Phi$ is at detecting entanglement in quantum states, it is worth comparing it to other ways of quantifying and classifying this concept that have appeared in the literature.

Recall from \cite{LKCH00} that if $W_1,W_2 \in M_m \otimes M_n$ are entanglement witnesses then $W_1$ is said to be \emph{finer} than $W_2$ if $\tr(W_1\rho) < 0$ whenever $\tr(W_2\rho) < 0$, and in such a case we think of $W_1$ as a ``better'' entanglement witness than $W_2$. Furthermore, $W_1$ is said to be \emph{optimal} if there is no entanglement witness (other than $W_1$ itself and its multiples) that is finer than it. We now show that these concepts agree with the non-CP ratio of a positive linear map in a natural way.

\begin{theorem}
    Suppose $\Phi_1,\Phi_2 \in \HP(M_n,M_n)$ are positive but not completely positive. If $J(\Phi_1)$ is finer than $J(\Phi_2)$ then $\nu(\Phi_1) \geq \nu(\Phi_2)$.
\end{theorem}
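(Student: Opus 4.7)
The plan is to convert the ``finer'' relation on the entanglement witnesses $J(\Phi_1), J(\Phi_2)$ into an algebraic decomposition at the level of the underlying maps, then invoke the monotonicity properties of $\nu$ listed in Section~\ref{sec:bounds_properties}.

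The main step is to establish the following standard characterization: if $W_1$ is finer than $W_2$, then $W_2 = \lambda W_1 + P$ for some $\lambda \geq 0$ and some $P \in \Pos(M_n \otimes M_n)$. I would prove this by convex duality. Define the closed convex cones
\[
C_i = \{X \in \Pos(M_n \otimes M_n) : \tr(W_i X) \geq 0\}.
\]
The ``finer'' hypothesis (extended from density matrices to arbitrary positive semidefinite matrices by scaling, then contraposed) is equivalent to $C_1 \subseteq C_2$, and therefore to $C_2^\circ \subseteq C_1^\circ$ on the dual side. Since each $C_i$ is the intersection of $\Pos(M_n \otimes M_n)$ with a half-space, conic duality yields $C_i^\circ = \{\alpha W_i + P : \alpha \geq 0, P \succeq O\}$, provided one checks that this set is already closed rather than just the closure of the dual sum --- which follows from the observation that a genuine entanglement witness cannot be negative semidefinite (if $P_n + \alpha_n W_i \to Q$ with $\alpha_n \to \infty$, then $P_n/\alpha_n \to -W_i$, forcing $W_i \preceq O$). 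Since $W_2 \in C_2^\circ$, the containment produces the desired decomposition.

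Translating this through the Choi isomorphism gives $\Phi_2 = \lambda \Phi_1 + \Psi$ for the completely positive map $\Psi$ satisfying $J(\Psi) = P$. The hypothesis that $\Phi_2$ is not completely positive forces $\lambda > 0$, since $\lambda = 0$ would make $\Phi_2 = \Psi$ itself CP. The conclusion then follows by combining two bullet points from Section~\ref{sec:bounds_properties}: scaling by a positive constant leaves $\nu$ unchanged (as it preserves the sign of every eigenvalue of $(I_m \otimes \Phi)(\rho)$), while the addition of a completely positive map can only decrease $\nu$. Specifically,
\[
\nu(\Phi_2) = \nu(\lambda \Phi_1 + \Psi) \leq \nu(\lambda \Phi_1) = \nu(\Phi_1).
\]

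The main obstacle I anticipate is the convex-duality step: one has to carefully translate the strict-inequality definition of ``finer'' into the non-strict cone containment $C_1 \subseteq C_2$, and verify that $C_i^\circ$ equals the raw sum $\{\alpha W_i + P\}$ rather than merely its closure. Once those technical points are handled, the rest of the argument is a short assembly of previously established facts.
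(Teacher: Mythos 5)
Your proof is correct and follows essentially the same route as the paper: both reduce the ``finer'' hypothesis to a decomposition $\Phi_2 = \lambda\Phi_1 + \Psi$ with $\lambda > 0$ and $\Psi$ completely positive (the paper writes it as $J(\Phi_1) = cJ(\Phi_2) - P$, which is the same statement), and then conclude via the bullet-point monotonicity $\nu(\Phi + \Psi) \leq \nu(\Phi)$ together with scale-invariance of $\nu$. The only difference is that the paper simply cites Lemma~2 of \cite{LKCH00} for this characterization of ``finer,'' whereas you reprove it via conic duality; your duality argument, including the closedness check ruling out $W_i \preceq O$ for a genuine entanglement witness, is sound.
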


\begin{proof}
    We recall from \cite[Lemma~2]{LKCH00} that $J(\Phi_1)$ is finer than $J(\Phi_2)$ if and only if there is a scalar $0 < c \in \R$ and a matrix $P \in \Pos(M_n \otimes M_n)$ such that $J(\Phi_1) = cJ(\Phi_2) - P$. It follows that there is a completely positive map $\Psi \in \HP(M_n,M_n)$ (with $J(\Psi) = P$) such that $\Phi_1 = c\Phi_2 - \Psi$. The fact that $\nu(\Phi_1) \geq \nu(\Phi_2)$ now follows immediately from the bullet-point list of properties of $\nu$ from earlier.
\end{proof}

On the other hand, it is not the case that every optimal entanglement witness $J(\Phi)$ has $\nu(\Phi)$ being as large as possible (i.e., equal to $n-1$). For example, the (Choi matrix of the) Choi map $\Phi_C \in \HP(M_3,M_3)$ is known to be optimal (and even extreme \cite{Ha13}, which is a stronger property), but we showed in Theorem~\ref{thm:choi_map_bound} that it has $\nu(\Phi_C) \leq 5/3 < n-1 = 2$.

\section{Schmidt Number and k-Positivity}\label{sec:schmidt_k_pos}

Recall that if $\Phi \in \HP(M_n,M_n)$ is $k$-positive then $\nu_m(\Phi) \leq (m-k)(n-k)$ and that the transpose map shows in the $k = 1$ case that equality can be attained for all $m$ and $n$. We have not yet demonstrated the existence of a map attaining equality for larger values of $k$; the only $k$-positive map we have considered was the map $R_k$ from Section~\ref{sec:k_red_map}, which had $\nu_m(R_k) = \lceil m/k \rceil - 1 \ll (m-k)(n-k)$ and $\nu(R_k) = 1/k \ll n-k$.

Nonetheless, $k$-positive maps attaining the $(m-k)(n-k)$ bound do exist for all $m$, $n$, and $k$. Before stating and proving this result formally, we recall that the \emph{Schmidt rank} of a pure state $\ket{v} \in \C^m \otimes \C^n$ is the least integer $k$ such that we can write $\ket{v} = \sum_{i=1}^k \gamma_k\ket{u_i} \otimes \ket{w_i}$, and a map $\Phi \in \HP(M_n,M_n)$ is $k$-positive if and only if $(I_m \otimes \Phi)(\ketbra{v}{v}) \succeq O$ for all $\ket{v} \in \C^m \otimes \C^n$ with Schmidt rank $\leq k$ \cite{TH00}. Furthermore, we say that the Choi matrix $J(\Phi)$ of a $k$-positive map $\Phi$ is \emph{$k$-block positive} \cite{SSZ09} (so that a matrix is an entanglement witness if and only if it is $1$-block positive and not positive semidefinite).

\begin{theorem}\label{thm:sch_rank_k_pos}
    Suppose $1 \leq k \leq n$ and $m \geq 1$. There exists a $k$-positive map $\Phi \in \HP(M_n,M_n)$ such that $\nu_m(\Phi) = (m-k)(n-k)$.
\end{theorem}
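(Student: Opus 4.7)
The upper bound $\nu_m(\Phi) \le (m-k)(n-k)$ for every $k$-positive $\Phi \in \HP(M_n,M_n)$ is already recorded in the bullet-point list in Section~\ref{sec:bounds_properties}, so the substance of the theorem is the matching lower bound: producing a $k$-positive map that attains it. By Theorem~\ref{thm:neg_evals_subspace_corr}, it suffices to exhibit a subspace $\mathcal{S} \subseteq \C^m \otimes \C^n$ of dimension exactly $(m-k)(n-k)$ together with a $k$-positive map $\Phi \in \HP(M_n,M_n)$ such that $(I_m \otimes \Phi)(\rho) \not\succeq O$ for every $\rho \in \D(M_m \otimes M_n)$ with $\supp(\rho) \subseteq \mathcal{S}$. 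Any such $\mathcal{S}$ is automatically a Schmidt-rank-$>k$ subspace: since $\Phi$ is $k$-positive, every pure state of Schmidt rank $\le k$ is sent to a positive semidefinite image under $I_m \otimes \Phi$, and hence cannot belong to $\mathcal{S}$. The result of \cite{CW08} cited earlier in the paper guarantees that subspaces of exactly this dimension $(m-k)(n-k)$ consisting entirely of Schmidt-rank-$>k$ vectors do exist, so the subspace side of the construction is available.

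For the map, my plan is to generalize the $k=1$ construction of \cite{Joh13} used for the transpose map. The approach is to fix an explicit Schmidt-rank-$>k$ subspace $\mathcal{S}$ of dimension $(m-k)(n-k)$ (e.g., a Vandermonde/polynomial-style construction lying inside a suitable diagonal pattern of $\C^m \otimes \C^n$), together with a specific mixed state $\rho$ supported on $\mathcal{S}$, built as a scaled average of maximally entangled pure states chosen to lie in $\mathcal{S}$, in the spirit of the state used in the proof of Theorem~\ref{thm:red_nminus1_neg_eigs}. One then engineers $\Phi$ (via its adjoint) so that $(I_m \otimes \Phi^*)(\rho)$ takes the explicit form $\lambda I - c P$ for some $\lambda, c > 0$ and some projection $P$ of rank exactly $(m-k)(n-k)$, which immediately yields $(m-k)(n-k)$ strictly negative eigenvalues in the image. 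The $k$-positivity of $\Phi$ can then be verified by computing its Choi matrix and checking $k$-block positivity, using the Schmidt-rank-$>k$ structure of $\mathcal{S}$ to control how much negativity survives when paired with Schmidt-rank-$\le k$ test vectors.

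The main obstacle I expect is the interplay between the $k$-positivity of $\Phi$, a global condition on all Schmidt-rank-$\le k$ vectors in $\C^n \otimes \C^n$, and the requirement that \emph{every} mixed state supported on $\mathcal{S}$ be detected---strictly stronger than merely detecting every pure state in $\mathcal{S}$, because $(I_m \otimes \Phi)(\rho) \not\succeq O$ is not preserved under convex combinations in a useful direction. Organizing the proof on the adjoint side---exhibiting a single $\rho$ whose image $(I_m \otimes \Phi^*)(\rho)$ has exactly $(m-k)(n-k)$ negative eigenvalues, rather than verifying non-positivity subspace-wise---bypasses this issue and is what makes the construction analogous to the transpose case at all manageable. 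Verifying that the resulting $\Phi$ is genuinely $k$-positive (and not merely positive) is where I expect most of the real work to sit; the calculation will reduce to showing that a specific perturbation of the transpose map's Choi operator remains nonnegative on every Schmidt-rank-$\le k$ vector, a condition that can be reformulated as a semidefinite feasibility statement about the subspace $\mathcal{S}$.
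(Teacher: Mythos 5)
Your overall strategy coincides with the paper's: take the $(m-k)(n-k)$-dimensional subspace $\mathcal{S}\subseteq\C^m\otimes\C^n$ of \cite{CW08} containing no vectors of Schmidt rank $\le k$, build a $k$-positive map whose Choi matrix is $I-cP$ with $P$ the orthogonal projection onto $\mathcal{S}$, and count negative eigenvalues on the adjoint side via Theorem~\ref{thm:neg_evals_subspace_corr}. But the proposal stops short at exactly the step that carries all the content: you never establish that a constant $c$ exists which is simultaneously large enough that $I-cP$ has $\rank(P)$ negative eigenvalues and small enough that $I-cP$ is still $k$-block positive. You flag this as ``where most of the real work will sit'' and propose SDP feasibility reformulations, explicit Vandermonde-style subspaces, and a state built from maximally entangled vectors inside $\mathcal{S}$ --- none of which resolves the tension, and without resolving it the construction does not get off the ground. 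The missing argument is a short compactness observation, the same one already used in the second half of the proof of Theorem~\ref{thm:neg_evals_subspace_corr}: the set of unit vectors in $\C^m\otimes\C^n$ of Schmidt rank at most $k$ is compact, and $\bra{v}P\ket{v}<1$ for every such $\ket{v}$ because none of them lies in $\mathcal{S}$; hence $c_0:=\max_{\ket{v}}\bra{v}P\ket{v}<1$, and any $c$ with $1<c\le 1/c_0$ makes $I-cP$ $k$-block positive while giving it exactly $\rank(P)=(m-k)(n-k)$ strictly negative eigenvalues.

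Once that is in place, most of your remaining machinery is superfluous. You do not need a state supported on $\mathcal{S}$, nor an average of maximally entangled states: taking $\Phi$ to be the map with $J(\Phi)=I-cP$, the defining identity of the Choi matrix already exhibits a single state (the maximally entangled one) whose image under $I_m\otimes\Phi$ is a positive multiple of $I-cP$ and hence has $(m-k)(n-k)$ negative eigenvalues, so Theorem~\ref{thm:neg_evals_subspace_corr} gives $\nu_m(\Phi^*)\ge(m-k)(n-k)$, and the bullet-point upper bound applied to the $k$-positive map $\Phi^*$ gives equality. The only remaining housekeeping, which the paper does and you omit, is to assume $m\le n$ (swapping the roles of $m$ and $n$ otherwise) and to pad $\Phi$ with zeros so that it acts on $M_n$ rather than $M_m$.
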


\begin{proof}
    Throughout this proof, we assume that $m \leq n$; if $m > n$ then we just swap $m$ and $n$ and the proof does not change substantially.
    
    Recall from \cite{CW08} that there exists a subspace $\mathcal{S} \subseteq \C^m \otimes \C^n$ with $\operatorname{dim}(\mathcal{S}) = (m-k)(n-k)$ and the property that it contains no states with Schmidt rank $\leq k$. If $P \in \Pos(M_m \otimes M_n)$ is the orthogonal projection onto $\mathcal{S}$, then (via the same argument from the proof of Theorem~\ref{thm:neg_evals_subspace_corr} in the $k=1$ case) there exists a scalar $c > 1$ such that $I - cP$ is $k$-block positive with $(m-k)(n-k)$ negative eigenvalues. It follows that $I - cP$ is the Choi matrix of some $k$-positive linear map $\Phi \in \HP(M_m,M_n)$ with the property that $I - cP = J(\Phi) = n(I_m \otimes \Phi)\big(\ketbra{\psi^+}{\psi^+}\big)$ has $(m-k)(n-k)$ negative eigenvalues. We can simply pad $\Phi$ with zeros to turn it into a $k$-positive map on $M_n$ producing the same number of negative eigenvalues, and then Theorem~\ref{thm:neg_evals_subspace_corr} tells us that $\nu_m(\Phi^*) = (m-k)(n-k)$.
\end{proof}

The proof of the above theorem can \emph{almost} be made constructive, with the one difficulty being that we do not know exactly how small we have to choose $c > 1$ in order to retain $k$-positivity of $\Phi$.

\begin{example}
    We now construct a $3$-positive linear map $\Phi \in \HP(M_5,M_5)$ with $\nu_5(\Phi) = (5-3)^2 = 4$, which is maximal in these dimensions. Set $\omega = e^{2i\pi/5}$ and define a completely positive map $\Omega \in \HP(M_5,M_5)$ by $\Omega(X) = \sum_{i=1}^4 A_iXA_i^*$, where
    \begin{align*}
        A_1 & = \frac{1}{2}\begin{bmatrix}
            0 & 1 & 0 & 0 & 0 \\
            0 & 0 & 1 & 0 & 0 \\
            0 & 0 & 0 & 1 & 0 \\
            0 & 0 & 0 & 0 & 1 \\
            0 & 0 & 0 & 0 & 0 \\
        \end{bmatrix}, & A_2 & = \frac{1}{2}\begin{bmatrix}
            0 & 0 & 0 & 0 & 0 \\
            1 & 0 & 0 & 0 & 0 \\
            0 & 1 & 0 & 0 & 0 \\
            0 & 0 & 1 & 0 & 0 \\
            0 & 0 & 0 & 1 & 0 \\
        \end{bmatrix},\\
        A_3 & = \frac{1}{\sqrt{5}}\begin{bmatrix}
            1 & 0 & 0 & 0 & 0 \\
            0 & 1 & 0 & 0 & 0 \\
            0 & 0 & 1 & 0 & 0 \\
            0 & 0 & 0 & 1 & 0 \\
            0 & 0 & 0 & 0 & 1 \\
        \end{bmatrix}, & A_4 & = \frac{1}{\sqrt{5}}\begin{bmatrix}
            1 & 0 & 0 & 0 & 0 \\
            0 & \omega & 0 & 0 & 0 \\
            0 & 0 & \omega^2 & 0 & 0 \\
            0 & 0 & 0 & \omega^3 & 0 \\
            0 & 0 & 0 & 0 & \omega^4 \\
        \end{bmatrix}.
    \end{align*}
    
    If $c$ is any scalar then it is straightforward to show that the map $\Phi_c \in \HP(M_5,M_5)$ defined by $\Phi_c(X) = \tr(X)I - c\Omega(X)$ has Choi matrix
    \[
        J(\Phi_c) = I - cP,
    \]
    where $P$ is the orthogonal projection onto the $4$-dimensional subspace
    \[
        \mathcal{S} := \operatorname{span}\{\operatorname{vec}(A_i) : i = 1, 2, 3, 4 \} \subset \C^{25}
    \]
    and $\operatorname{vec}(A_i)$ refers to the \emph{vectorization} of $A_i$ (i.e., the vector in $\C^{25}$ that is obtained by reading $A_i$ row-by-row).
    
    It follows that if $c > 1$ then $J(\Phi_c) = 5(I_5 \otimes \Phi_c)\big(\ketbra{\psi^+}{\psi^+}\big)$ has $4$ negative eigenvalues, so $\nu(\Phi_c^*) = 4$. Furthermore, we can see that every state in $\mathcal{S}$ has Schmidt rank strictly bigger than $3$ via the argument of \cite{CW08}: every linear combination of $A_1$, $A_2$, $A_3$, $A_4$ has at least $4$ non-zero entries on its leading diagonal and thus has rank $\geq 4$, and thus every linear combination of $\operatorname{vec}(A_1)$, $\operatorname{vec}(A_2)$, $\operatorname{vec}(A_3)$, $\operatorname{vec}(A_4)$ has Schmidt rank $\geq 4$. It follows that there exists some $c_* > 1$ such that $\Phi_c$ (and thus $\Phi_c^*$) is $3$-positive whenever $1 < c \leq c_*$.
\end{example}

\section{A Multipartite NPT Subspace}\label{sec:multipartite_ppt}

Recall that if $T$ is the transpose map acting on $M_n$ then $\nu_m(T) = (m-1)(n-1)$, which means two (essentially equivalent) things: (a) the largest subspace $\mathcal{S} \subseteq \C^m \otimes \C^n$ with $(I_m \otimes T)(\rho) \not\succeq O$ whenever $\supp(\rho) \in \mathcal{S}$ has $\dim(\mathcal{S}) = (m-1)(n-1)$, and (b) the maximal number of negative eigenvalues that $(I_m \otimes T)(\rho)$ can have is $(m-1)(n-1)$. Here we generalize these statements to the multipartite case.

That is, we now consider the problem of how large a subspace $\S \subseteq \C^{d_1} \otimes \C^{d_2} \otimes \cdots \otimes \C^{d_p}$ can be if it has the property that every $\rho$ supported on $\S$ has non-positive partial transpose across at least one cut (we call such a subspace an \emph{NPT subspace}). Since any such subspace must be entangled, it cannot have dimension larger than the Parthasarathy bound \cite{Par04} of $d_1d_2\cdots d_p - d_1-d_2-\cdots-d_p + p-1$. As we have noted before, this bound was shown to be attainable by an NPT subspace in the bipartite case (in which case the bound simplifies to $d_1d_2 - d_1 - d_2 + 1 = (d_1-1)(d_2-1)$) in \cite{Joh13}, and partial progress on this problem was made in the multipartite case in \cite{SAS14}. We now solve this problem by showing that there is an NPT subspace that attains the Parthasarathy bound in the multipartite case as well.

To this end, define
 \begin{align}\label{eq:J_set}
    J \defeq \{0,1,2,\ldots, d_1+d_2+\cdots+d_p-p\},
 \end{align}
 and for each $s \in J$ let
\begin{align*}
    I_s \defeq \{(i_1,i_2,\ldots,i_p) : i_1+i_2+\cdots+i_p=s\}.
\end{align*}
We then consider the subspace $\mathcal{S} \subset \C^{d_1} \otimes \C^{d_2} \otimes \cdots \otimes \C^{d_p}$ defined by
\begin{align}\label{eq:npt_multi_subspace}
    \mathcal{S} = \left\{ \mathbf{v} \in \C^{d_1} \otimes \C^{d_2} \otimes \cdots \otimes \C^{d_p} : \sum_{i \in I_s} v_i = 0 \ \ \text{for all} \ \ s \in J \right\},
\end{align}
where we index the entries of the tensor $\ket{v}$ via tuples in the obvious way (i.e., if $i = (i_1,i_2,\ldots,i_p)$ then $v_i = (\bra{i_1}\otimes\bra{i_2}\otimes\cdots\otimes\bra{i_p})\mathbf{v}$). This subspace places $1$ linear restriction on $\ket{v}$ for each of the $|J| = d_1+d_2+\cdots+d_p-p+1$ possible values of $s$, so it has dimension $\dim(\mathcal{S}) = d_1d_2\cdots d_p - d_1-d_2-\cdots-d_p+p-1$, as claimed. Furthermore, we have the following result.

\begin{theorem}\label{thm:npt_subspace_multi}
    Let $\S$ be the subspace defined in Equation~\eqref{eq:npt_multi_subspace}. Every state $\rho \in \D(M_{d_1} \otimes M_{d_2} \otimes \cdots \otimes M_{d_p})$ with $\supp(\rho) \subseteq \S$ has the property that its partial transpose on the $j$-th subsystem is non-positive for some $1 \leq j < p$. In particular, $\S$ is an NPT subspace.
\end{theorem}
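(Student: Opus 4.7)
The plan is to prove the contrapositive by strong induction on the total weight $s = i_1 + \cdots + i_p$: assuming $\supp(\rho) \subseteq \S$ and $\rho^{T_j} \succeq O$ for every $j \in \{1, \ldots, p-1\}$, I will show that $\rho\ket{i} = 0$ for every basis vector $\ket{i}$ with $\sum_k i_k = s$, from which $\rho = O$ follows.

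The base case $s = 0$ is immediate: $\ket{\phi_0} = \ket{0, \ldots, 0}$ is itself a basis vector, so the support condition $\rho\ket{\phi_0} = 0$ is exactly $\rho\ket{0, \ldots, 0} = 0$. For the inductive step, assume $\rho\ket{v} = 0$ for every basis vector $\ket{v}$ of weight strictly less than $s$. Since partial transposes preserve diagonal entries, $\rho_{v,v} = 0$ gives $\rho^{T_j}_{v,v} = 0$, and positivity of $\rho^{T_j}$ then forces $\rho^{T_j}\ket{v} = 0$ for every $j \in \{1, \ldots, p-1\}$. The partial-transpose identity $(\ket{a}\bra{b})^{T_j} = \ket{a_1, \ldots, a_{j-1}, b_j, a_{j+1}, \ldots, a_p}\bra{b_1, \ldots, b_{j-1}, a_j, b_{j+1}, \ldots, b_p}$ translates this into the family of zero matrix elements $\rho_{(d_{-j}, v_j), (v_{-j}, d_j)} = 0$ for every $d$, where $(x_{-j}, y)$ denotes the tuple obtained from $x$ by replacing its $j$-th coordinate with $y$.

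The crux of the proof is then to show that for any two distinct $r, i \in I_s$, the off-diagonal matrix element $\rho_{r, i}$ vanishes. Since $\sum_k r_k = \sum_k i_k = s$ and $r \neq i$, the two tuples must disagree in at least two coordinates, so at least one disagreeing coordinate $j$ lies in $\{1, \ldots, p-1\}$; without loss of generality $i_j < r_j$. Setting $v = (r_{-j}, i_j)$---a tuple of weight $s - (r_j - i_j) < s$---and $d = (i_{-j}, r_j)$, the displayed identity gives $\rho^{T_j}_{d, v} = \rho_{(d_{-j}, v_j), (v_{-j}, d_j)} = \rho_{i, r}$, which vanishes by the induction hypothesis. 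Hermiticity then yields $\rho_{r, i} = 0$. With all off-diagonal entries of $\rho$ within $I_s$ shown to vanish, the linear relation $\rho\ket{\phi_s} = 0$ restricted to row $r$ collapses to $\rho_{r, r} = 0$, and positivity of $\rho$ yields $\rho\ket{r} = 0$, closing the induction.

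The main obstacle, which is really the key combinatorial observation, is the claim that any two distinct elements of $I_s$ disagree in some coordinate $j \in \{1, \ldots, p-1\}$. This follows because their equal coordinate sums rule out disagreement in exactly one coordinate, so disagreement occurs in at least two coordinates, at most one of which can equal $p$. This observation is precisely why the theorem restricts to $j < p$: only $p-1$ single-subsystem partial-transpose conditions are needed to drive the induction.
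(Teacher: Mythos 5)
Your proof is correct, and although it turns on the same combinatorial core as the paper's argument, it reaches the conclusion by a noticeably different route. The shared core is this: two distinct tuples in $I_s$ have equal coordinate sums, so they disagree in at least two coordinates, hence in some coordinate $j<p$ with a definite sign, and replacing the larger $j$-th entry by the smaller one produces an index of weight strictly less than $s$. The paper uses this directly: it decomposes $\rho$ into pure states, takes the \emph{minimal} weight $s$ at which $\rho$ has a nonzero entry, shows by contradiction (using the defining relations of $\S$) that some off-diagonal entry $\rho_{j',i'}$ with $i',j'\in I_s$ is nonzero, and then, after permuting tensor factors, exhibits an explicit $2\times 2$ principal submatrix of a single-party partial transpose with a zero diagonal entry and a nonzero off-diagonal entry, hence negative determinant. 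Your contrapositive induction replaces that determinant computation by its logical dual---a positive semidefinite matrix with a vanishing diagonal entry has a vanishing row---and replaces the minimal-$s$ contradiction by a bottom-up induction that annihilates $\rho$ level by level: positivity of the partial transposes kills the off-diagonal entries within $I_s$, and then the relation $\rho\ket{\phi_s}=0$ kills the diagonal ones. What you gain is economy: no pure-state decomposition, no swap operators, no submatrix bookkeeping. What the paper's version buys is constructiveness: it identifies a specific subsystem $k'$ and a specific $2\times2$ block witnessing $T_{k'}(\rho)\not\succeq O$, which is the form of the statement that is reused afterwards to build multipartite decomposable entanglement witnesses with the maximal number of negative eigenvalues. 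Both proofs invoke the restriction to $j<p$ at exactly the same point and for the same reason.
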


We note that this theorem is optimal in two different ways---not only is its dimension as large as the dimension of an NPT subspace could possibly be, but the fact that it only considers $p-1$ single-system partial transpositions out of the $2^p$ total possible partial transpositions also cannot be improved upon. For example, if we were to only consider the partial transposition across one of the first $p-2$ subsystems then no such subspace of this dimension exists, since the largest entangled subspace in $\C^{d_1} \otimes \cdots\otimes \C^{d_{p-2}}  \otimes (\C^{d_{p-1}d_p})$ has dimension
\[
    d_1d_2\cdots d_p - d_1-d_2-\cdots-d_{p-2}-d_{p-1}d_p+p-2 < d_1d_2\cdots d_p - d_1-d_2-\cdots-d_p+p-1.
\]
For example, if we just require that the first single-system partial transposition on $M_2 \otimes M_2 \otimes M_2$ be non-positive, we can make the identification $M_2 \otimes M_2 \otimes M_2 \cong M_2 \otimes M_4$, and in that space the largest entangled subspace has dimension $(2-1)(4-1) = 3$, but our method constructs an NPT subspace of dimension $4$.

Before proceeding, we also note that the proof of this result is extremely technical, but morally is very similar to the proof of \cite[Theorem~1]{Joh13}, which is the analogous statement in the bipartite case. We thus recommend that the reader makes themselves familiar with the technicalities of that proof before delving into this one.

\begin{proof}[Proof of Theorem~\ref{thm:npt_subspace_multi}]
    Let
    \begin{align*}
        \rho = \sum_{a} p_a \ketbra{v^{(a)}}{v^{(a)}}\in \D(\C^{d_1}\otimes\dots\otimes \C^{d_p})
    \end{align*}
    be any density matrix with support contained in the subspace $\S$ defined in Equation~\eqref{eq:npt_multi_subspace} (i.e. $\ket{v^{(a)}} \in \mathcal{S}$ for all $a$). Our goal is to show that $\rho$ is NPT across some bipartite cut of $M_{d_1} \otimes \dots \otimes M_{d_p}$.
    
    Let $s \in J$ (where $J$ is as in Equation~\eqref{eq:J_set}) be the smallest integer such that $v^{(a)}_i \neq 0$ for some $a$ and $i \in I_s$. Let $i^\prime \neq j^\prime \in I_s$ be any indices for which
    \begin{align}\label{eq:choose_submatrix}
        \sum_{a} p_a \overline{v^{(a)}_{i'}}v^{(a)}_{j'} \neq 0
    \end{align}
    (we defer showing the existence of such indices until the end of the proof).
    
    Write $i^\prime=(i^\prime_1,i^\prime_2,\ldots,i^\prime_p)$ and $j^\prime=(j^\prime_1,j^\prime_2,\ldots,j^\prime_p)$. Since $\sum_{k=1}^p i^\prime_k=\sum_{k=1}^p j^\prime_k=s$ and $i^\prime \neq j^\prime$, there exist indices $1 \leq k^\prime \neq k^{\prime\prime} \leq p$ such that $i^\prime_{k^\prime} > j^\prime_{k^\prime}$ and $i^\prime_{k^{\prime\prime}} < j^\prime_{k^{\prime\prime}}$.
    
    Let $\sigma : \{1,2,\ldots,p\} \rightarrow \{1,2,\ldots,p\}$ be a permutation for which
    \begin{align*}
        \sigma(k^\prime) = 1 \quad \text{and} \quad \sigma(k^{\prime\prime}) = 2.
    \end{align*}
    Such a permutation $\sigma$ has the property that
    \begin{align*}
        (i^\prime_{\sigma^{-1}(2)},i^\prime_{\sigma^{-1}(3)},\ldots,i^\prime_{\sigma^{-1}(p)}) < (j^\prime_{\sigma^{-1}(2)},j^\prime_{\sigma^{-1}(3)},\ldots,j^\prime_{\sigma^{-1}(p)})
    \end{align*}
    when these tuples are viewed as $d_{\sigma^{-1}(2)},d_{\sigma^{-1}(3)},\ldots, d_{\sigma^{-1}(p)}$-ary strings---a property that follows from the fact that 
    \begin{align*}
        i^\prime_{\sigma^{-1}(2)}=i^\prime_{{k^{\prime\prime}}} < j^\prime_{{k^{\prime\prime}}}=j^\prime_{\sigma^{-1}(2)}.
    \end{align*}
    Let $W$ be the (unitary) swap matrix that permutes the tensor factors of $\C^{d_1} \otimes \C^{d_2} \otimes \cdots \otimes \C^{d_p}$ according to $\sigma$, i.e. on basis vectors
    \begin{align*}
    \ket{i}=\ket{i_1}\otimes\dots\otimes \ket{i_p} \in \C^{d_1} \otimes \dots \otimes \C^{d_p},
    \end{align*}
    $W$ acts as
    \begin{align*}
        W\ket{i}=\ket{i_{\sigma^{-1}(1)}}\otimes\dots \otimes \ket{i_{\sigma^{-1}(p)}} \in \C^{d_{\sigma^{-1}(1)}} \otimes \dots \otimes \C^{d_{\sigma^{-1}(p)}}.
        \end{align*}
        Let
    \begin{align*}
        \mu=W\rho W^*,
    \end{align*}
    so $\mu$ is NPT across some bipartite cut if and only if $\rho$ is. We now show that $\mu$ is NPT across the cut that separates $M_{d_{k^\prime}}$ from the other subsystems (i.e., if we let $T_{k^\prime}$ denote the partial transpose on the $k^\prime$-th subsystem then $T_{k^\prime}(\mu) \not\succeq O$).
    
    Let
    \begin{align*}
        D : \bigotimes_{k = 1}^p \C^{d_{\sigma^{-1}(k)}} \rightarrow \Lin\left(\bigotimes_{k=2}^p \C^{d_{\sigma^{-1}(k)}},\C^{d_{k^\prime}}\right)
    \end{align*}
    be the linear map defined on standard basis vectors by
    \begin{align*}
        D\left(\bigotimes_{k=1}^p \ket{i_{\sigma^{-1}(k)}}\right) =\ket{i_{k^\prime}}\bigotimes_{k=2}^p\bra{i_{\sigma^{-1}(k)}}.
    \end{align*}
    Then for each $a$, the $2 \times 2$ submatrix of $D(W\ket{v^{(a)}})$ corresponding to the rows $\ket{i^\prime_{k^\prime}}$, $\ket{j^\prime_{k^\prime}}$ and columns $\ket{i^\prime_{\sigma^{-1}(2)},i^\prime_{\sigma^{-1}(3)},\ldots, i^\prime_{\sigma^{-1}(p)}}$, $\ket{j^\prime_{\sigma^{-1}(2)},j^\prime_{\sigma^{-1}(3)},\ldots, j^\prime_{\sigma^{-1}(p)}}$ must take the form
    \begin{align*}
        \begin{bmatrix}
            0 & v^{(a)}_{j^\prime} \\
            v^{(a)}_{i^\prime} & *
        \end{bmatrix},
    \end{align*}
    where the asterisk ($*$) is a value we don't care about. The zero in the upper-left corner follows from the fact that this entry is equal to $v^{(a)}_{\ell}$ for
    \begin{align*}
        \ell=(i^\prime_1,\dots, i^\prime_{k^\prime-1},j^\prime_{k^\prime},i^\prime_{k^\prime+1},\dots, i^\prime_p),
    \end{align*}
    and
    \begin{align*}
        j^\prime_{k^\prime}+\sum_{\substack{k = 1 \\ k \neq k^\prime}}^p i^\prime_k < \sum_{k = 1}^p i^\prime_k = s,
    \end{align*}
    so $v^{(a)}_\ell = 0$ by our construction. A direct calculation now reveals that the $2 \times 2$ principal submatrix of $T_{k^\prime}(W\ketbra{v^{(a)}}{v^{(a)}}W^*)$ (where we recall that $T_{k^\prime}$ denotes the partial transpose on the $k^\prime$-th subsystem) corresponding to rows and columns $\ket{i^\prime_{\sigma^{-1}(1)},i^\prime_{\sigma^{-1}(2)},\ldots, i^\prime_{\sigma^{-1}(p)}}$ and $\ket{j^\prime_{\sigma^{-1}(1)},j^\prime_{\sigma^{-1}(2)},\ldots, j^\prime_{\sigma^{-1}(p)}}$ is
    \begin{align*}
    	\begin{bmatrix}0 & v^{(a)}_{i^\prime}\overline{v^{(a)}_{j^\prime}} \\ \overline{v^{(a)}_{i^\prime}}v^{(a)}_{j^\prime} & *\end{bmatrix},
    \end{align*}
    so the same principal submatrix of $T_{k^\prime}(\mu) = \sum_a p_a T_{k^\prime}(W\ketbra{v^{(a)}}{v^{(a)}}W^*)$ is
    \begin{align*}
    	\sum_a p_a \begin{bmatrix}0 & v^{(a)}_{i^\prime}\overline{v^{(a)}_{j^\prime}} \\ \overline{v^{(a)}_{i^\prime}}v^{(a)}_{j^\prime} & *\end{bmatrix}.
    \end{align*}
    Since the determinant of this principal submatrix is
    \begin{align*}
    	-\left| \sum_a p_a\overline{v^{(a)}_{i^\prime}}v^{(a)}_{j^\prime} \right|^2,
    \end{align*}
    and we specifically chose this quantity to be non-zero in Equation~\eqref{eq:choose_submatrix}, we know that this determinant is strictly negative. It follows that this $2 \times 2$ principal submatrix of $T_{k^\prime}(\mu)$ has a negative eigenvalue, so $T_{k^\prime}(\mu)$ itself has a negative eigenvalue as well, as desired.
    
    We conclude by showing that there must exist indices $i^\prime \neq j^\prime \in I_s$ so as to satisfy Equation~\eqref{eq:choose_submatrix}:
    \begin{align*}
        \sum_{a} p_a \overline{v^{(a)}_{i^\prime}}v^{(a)}_{j^\prime} \neq 0.
    \end{align*}
    Indeed, assume towards contradiction that
    \begin{align}\label{eq:contr_assumpt_ppt}
        \sum_{a} p_a \overline{v^{(a)}_{i}}v^{(a)}_{j} = 0
    \end{align}
    for all $i \neq j \in I_s$. Let $i^\prime \in I_s$ be any index such that $v^{(a)}_{i^\prime} \neq 0$ for some $a$, and let $j^{\prime\prime} \neq i^\prime \in I_s$ be any other index. The fact that $\ket{v^{(a)}} \in \S$ tells us that
    \begin{align*}
        v^{(a)}_{j^{\prime\prime}} = -\sum_{\substack{i \in I_s \\ i \neq j^{\prime\prime}}} v^{(a)}_{i},
    \end{align*}
    which implies
    \begin{align*}
        0 = \sum_{a} p_a \overline{v^{(a)}_{i^\prime}}v^{(a)}_{j^{\prime\prime}} = \sum_{a} p_a \overline{v_{i^\prime}^{(a)}} \left(-\sum_{\substack{i \in I_s\\ i \neq j^{\prime\prime}}} v^{(a)}_{i}\right) = -\sum_{a} p_a \big|v^{(a)}_{i^\prime}\big|^2 < 0,
    \end{align*}
    where we note that we used the assumption~\eqref{eq:contr_assumpt_ppt} in both the first equality and the final equality above. Since $0 < 0$ is an obvious contradiction, we conclude that such indices $i^\prime \neq j^\prime \in I_s$ do exist, which completes the proof.
\end{proof}

\subsection{Eigenvalues of Multipartite Decomposable Entanglement Witnesses}\label{sec:multipartite_ppt_ews}

Just like there is a natural correspondence between the dimension of an NPT subspace and the number of negative eigenvalues that the partial transpose can produce in the bipartite case, there is also such a correspondence in the multipartite case. In particular, since the maximum dimension of a NPT subspace is $d_1d_2\cdots d_p - d_1-d_2-\cdots-d_p+p-1$, we can conclude (via the exact same argument used in the proof of Theorem~\ref{thm:neg_evals_subspace_corr}) that every multipartite decomposable entanglement witness (i.e., sum of partial transposes of different combinations of subsystems of positive semidefinite matrices) has at most that many negative eigenvalues.

Furthermore, Theorem~\ref{thm:npt_subspace_multi} shows that not only can decomposable entanglement witnesses not have more than $d_1d_2\cdots d_p - d_1-d_2-\cdots-d_p+p-1$ negative eigenvalues, but this number of negative eigenvalues can even be attained if we only sum up $p-1$ single-system partial transpositions of positive semidefinite matrices (rather than $p(p-1)/2$ such matrices like is required to construct general decomposable entanglement witnesses).

For example, if we define
\begin{align*}
    \ket{v_1} & = \ket{001} + \ket{010} + 2\ket{111}, & \ket{v_2} & = \ket{001} + \ket{100} + 2\ket{111}, \\
    \ket{w_1} & = \ket{110} + \ket{101} + 2\ket{000}, & \ket{w_2} & = \ket{110} + \ket{011} + 2\ket{000}, \quad \text{and} \\
    X_1 & = \ketbra{v_1}{v_1} + \ketbra{w_1}{w_1} & X_2 & = \ketbra{v_2}{v_2} + \ketbra{w_2}{w_2},
\end{align*}
then $X_1,X_2 \in \Pos(M_2 \otimes M_2 \otimes M_2)$ and
\begin{align*}
    (T \otimes I \otimes I)(X_1) + (I \otimes T \otimes I)(X_2) = \left[\begin{array}{c|c}\begin{array}{cc|cc}8 & \cdot & \cdot & \cdot \\
     \cdot & 2 & 3 & \cdot \\\hline
     \cdot & 3 & 1 & \cdot \\
     \cdot & \cdot & \cdot & 1
     \end{array} & \begin{array}{cc|cc}\cdot & \cdot & \cdot & \cdot \\
     3 & \cdot & \cdot & \cdot \\\hline
     4 & \cdot & \cdot & \cdot \\
     \cdot & 4 & 3 & \cdot\end{array}\\[-0.2cm] \\\hline\\[-0.2cm]
    \begin{array}{cc|cc}\cdot & 3 & 4 & \cdot \\
     \cdot & \cdot & \cdot & 4 \\\hline
     \cdot & \cdot & \cdot & 3 \\
     \cdot & \cdot & \cdot & \cdot\end{array} & \begin{array}{cc|cc}1 & \cdot & \cdot & \cdot \\
     \cdot & 1 & 3 & \cdot \\\hline
     \cdot & 3 & 2 & \cdot \\
     \cdot & \cdot & \cdot & 8\end{array} 
     \end{array}\right],
\end{align*}
is a decomposable entanglement witness with eigenvalues $8$, $8$, $8$, $8$, $-1$, $-1$, $-3$, and $-3$ (counting multiplicity). In particular, this is a decomposable entanglement witness with $4$ negative eigenvalues, which is maximal for a $3$-qubit entanglement witness.

In general, a multipartite decomposable entanglement witness with the maximal number of negative eigenvalues can be constructed via the same approach that was used in the bipartite case in \cite{Joh13}: we let $P \in \Pos(M_{d_1} \otimes M_{d_2} \otimes \cdots \otimes M_{d_p})$ be the orthogonal projection onto the subspace $\S$ from Equation~\eqref{eq:npt_multi_subspace} and $T_k$ denote the partial transpose on the $k$-th subsystem, and consider the following SDP in the variable $c$ and matrix variables $X_1,\ldots,X_{p-1}$:
\begin{align*}
    \begin{matrix}
        \begin{tabular}{r l}
        \text{maximize:} & $c$ \\
        \text{subject to:} & $T_1(X_1) + \cdots + T_{p-1}(X_{p-1}) \leq I - cP$ \\
        \ & $X_1,\ldots,X_{p-1} \succeq O$ \\
        \end{tabular}
    \end{matrix}
\end{align*}
The optimal value of this semidefinite program is strictly bigger than $1$, so $T_1(X_1) + \cdots + T_{p-1}(X_{p-1})$ (which is a decomposable entanglement witness) has exactly $\rank(P) = \dim(\S) = d_1d_2\cdots d_p - d_1-d_2-\cdots-d_p+p-1$ negative eigenvalues.

\section{Conclusions and Open Questions}\label{sec:conclusions}

In this work, we introduced the non-$m$-positive and non-completely-positive ratio of a Hermiticity-preserving linear map. We established the basic properties of this quantity, and motivated it as a measure of how good a positive (or $k$-positive) linear map is at detecting entanglement in quantum states. We then developed some methods of bounding these quantities and we applied our methods to well-known maps like the reduction map and its generalizations, the Choi map, and the Breuer--Hall map.

Some questions arising from this work that remain unanswered include:

\begin{itemize}
    \item What are the true values of $\nu(\Phi_B)$ and $\nu(\Phi_C)$, where $\Phi_B$ and $\Phi_C$ are the Breuer--Hall and Choi maps, respectively? We provided some bounds in Theorems~\ref{thm:choi_map_bound} and~\ref{thm:bh_map_nu}, but numerics suggest that they are not tight.
    
    \item We showed that if $\Phi \in \HP(M_n,M_n)$ is not $k$-positive then $\nu(\Phi) \geq 1/k$. Furthermore, the generalized reduction maps from Section~\ref{sec:k_red_map} show that, for any $\varepsilon > 0$, there is a non-$k$-positive map $\Phi$ with $\nu(\Phi) = 1/k + \varepsilon$ (we can choose $\Phi = R_{k-\varepsilon}$, for example). What about the $\varepsilon = 0$ case? Is $\nu(\Phi) = 1/k$ possible, or is it the case that $\nu(\Phi) > 1/k$?
    
    \item In Theorem~\ref{thm:npt_subspace_multi}, we computed the dimension of the largest subspace of $\mathcal{S} \subseteq \C^{d_1} \otimes \cdots \otimes \C^{d_p}$ with the property that, whenever $\rho$ is supported on $\mathcal{S}$, at least one of its single-party partial transpositions is non-positive. What is the largest such dimension if we require instead that at least $k \geq 2$ of its single-party partial transpositions are non-positive?
    
    \item In Theorem~\ref{thm:sch_rank_k_pos}, we showed that for all $1 \leq k \leq n$ and $m \geq 1$ there exists a $k$-positive map $\Phi \in \HP(M_n,M_n)$ such that $\nu_m(\Phi) = (m-k)(n-k)$. Is it possible to find a \emph{single} $k$-positive map $\Phi \in \HP(M_n,M_n)$ such that $\nu_m(\Phi) = (m-k)(n-k)$ for \emph{all} $m \geq 1$? In the $k = 1$ case, the transpose map does the job, but it is not immediately clear what happens when $k > 1$.
\end{itemize}

\noindent \textbf{Acknowledgements.} We thank John Watrous and Jamie Sikora for helpful discussions. N.J.\ was supported by NSERC Discovery Grant number RGPIN-2016-04003.

\printbibliography
\end{document}